\def\qed{ \ \vrule width.2cm height.2cm depth0cm\smallskip}
\def\cNP{\hbox{\rm \sffamily NP}}
\def\inst#1{$^{#1}$}
\def\wn{\mathrm{wn}}
\def\length{\mathrm{height}}
\newif\iflong
\begin{document}

\title{C-planarity of Embedded Cyclic c-Graphs}

\author{Radoslav Fulek\inst{1}\thanks{The research leading to these results has received funding from the People Programme (Marie Curie Actions) of the European Union's Seventh Framework Programme (FP7/2007-2013) under REA grant agreement no [291734].}
\iflong\else\thanks{The omitted parts of the proof are in the appendix.}\fi}

\institute{
IST Austria, Am Campus 1, Klosterneuburg 3400, Austria\\
\email{radoslav.fulek@gmail.com}
}

\maketitle

\begin{abstract}
We show that c-planarity is solvable in quadratic time for 
flat clustered graphs with three clusters if the combinatorial embedding
of the underlying  graph is fixed. In simpler graph-theoretical terms our result can be viewed as follows. Given a graph $G$ with the vertex set  partitioned into three parts embedded on a 2-sphere, our algorithm decides if we can augment $G$ by adding edges 
without creating an edge-crossing so that in the resulting spherical graph the vertices of each part induce a  connected sub-graph.
We proceed by a reduction to the problem of testing the existence of  a perfect matching in planar
bipartite graphs.
We formulate our result in a slightly more general setting of cyclic clustered graphs,
i.e., the simple graph  obtained by contracting each cluster, where we disregard loops and multi-edges, is a cycle. 
\end{abstract}

%
%
%
%

\setcounter{page}{1}
\section{Introduction}

Testing planarity of graphs with additional constraints is a popular theme in
the area of graph visualizations.
One of most the prominent such planarity variants, c-planarity, raised in 1995 by Feng, Cohen and Eades~\cite{FCEa95,FCEb95} 
asks for a given planar graph $G$ equipped with a hierarchical structure on its vertex 
set, i.e., clusters, to decide if a planar embedding $G$ with the following property exists:
the vertices in each cluster are drawn inside a disc so that the discs
form a laminar set family corresponding to the given hierarchical structure
and the embedding has the least possible number of edge-crossings with the boundaries of the discs.
Shortly after, several groups of researchers tried to settle
the main open problem formulated by Feng et al. asking to decide its complexity
status, i.e., either provide
a polynomial/sub-exponential-time  algorithm for c-planarity or show its \cNP-hardness.
First, Biedl~\cite{B98} gave
 a polynomial-time algorithm for c-planarity with two clusters. A different approach
for two clusters was considered by Hong and Nagamochi~\cite{HN16}
and quite recently in~\cite{FKMP15}.
 The result also follows from a work by Gutwenger et al.~\cite{GJL+02}.
Beyond two clusters a polynomial time algorithm for c-planarity was obtained only in special cases,
e.g.,~\cite{BFPP08,GLS05,GJL+02,JJK+09,JKK+09}, and most recently in~\cite{BR14+,CBFK14+}. Cortese et al.~\cite{CDPP05} shows that c-planarity is solvable in polynomial
time if the underlying  graph is a cycle and the number of clusters
is at most three.

 In the present work we generalize the result of Cortese et al. to the class of all planar graphs
with a given combinatorial embedding. In a recent pre-print~\cite{F14+} we established 
a strengthening for trees, where we do not fix the embedding.
In the general case (including already the case of three clusters) of so-called flat clustered graphs a similar result was obtained
only in very limited cases. Specifically, either when every face of $G$ is incident
to at most five vertices~\cite{BF07,FKMP15}, or when there exist at most two vertices of a cluster incident to a single face~\cite{CBFK14+}.
We remark that the techniques of the previously mentioned papers do not give
a polynomial-time algorithm for the case of three clusters, and also do not seem to be adaptable
to this setting. Our result and the technique used to achieve it suggest that, for a fairly general class of clustered graphs, c-planarity could be tractable/solvable in  sub-exponential time at least with a fixed combinatorial embedding. 

 {\bf Notation.}
Let $G=(V,E)$ denote a connected planar graph possibly with multi-edges.
For  standard graph theoretical definitions such as path, cycle, walk etc.,
we refer reader to~\cite[Section 1]{D05}. 
A \emph{drawing} of $G$ is a representation of $G$ in the plane where every vertex
 in $V$ is represented by a unique point and every
edge $e=uv$ in $E$ is represented by a Jordan arc joining the two points that represent $u$ and $v$. 
We assume that in a drawing no edge passes through a vertex,
no two edges touch and every pair of edges cross in finitely many points.
An \emph{embedding} of $G$ is an  edge-crossing free drawing.
If it leads to no confusion, we do not distinguish between
a vertex or an edge and its representation in the drawing and we use the words ``vertex'' and ``edge'' in both
 contexts.
A  \emph{face} in an embedding is a connected component of the complement of the embedding 
of $G$ (as a topological space) in the plane.
 The \emph{facial walk} of $f$ is the closed walk in $G$ with a fixed orientation that we obtain by traversing the boundary of $f$ counter-clockwise.
In order to simplify the notation we sometimes denote the facial walk of a face $f$ by $f$. 
  A pair of consecutive edges $e$ and $e'$ in a facial walk $f$ creates a \emph{wedge} incident to $f$ at their common vertex.
  A vertex or an edge is \emph{incident} to a face $f$, if it appears on its facial walk.
The \emph{rotation} at a vertex is the counter-clockwise cyclic order of the end pieces of its incident edges
in a drawing of $G$.
An embedding of $G$ is up to an isotopy and the choice of an \emph{outer} (unbounded) face described by the rotations at its vertices. We call such a description of an embedding of $G$ a \emph{combinatorial embedding}. Remaining faces are \emph{inner faces}.
The \emph{interior} and \emph{exterior} of a cycle in an embedded graph is the bounded and unbounded, respectively, connected component
of its complement in the plane. 
Similarly, the \emph{interior} and \emph{exterior} of an inner face in an embedded graph is the bounded and unbounded, respectively, connected component
of the complement of its facial walk in the plane, and vice-versa for the outer face.
When talking  about interior/exterior or area of a cycle  
in a graph $G$ with a combinatorial embedding and a \emph{designated} outer face  we mean it with respect to an embedding in the isotopy class that $G$ defines.
For $V'\subseteq V$ we denote by $G[V']$ the sub-graph of $G$ induced by $V'$.



A \emph{flat clustered graph}, shortly  \emph{c-graph}, is a pair $(G,T)$, where $G=(V,E)$ is a graph and $T=\{V_0, \ldots, V_{c-1}\}$, $\biguplus_i V_i=V$, is a partition of the
vertex set into \emph{clusters}. See Figure~\ref{fig:treeEx} for an illustration.
A  c-graph $(G,T)$ is \emph{clustered planar} (or briefly \emph{c-planar}) if $G$ has an
 embedding in the plane such that (i)
for every $V_i\in T$ there is a topological disc $D(V_i)$, where $\mathrm{interior}(D(V_i))\cap \mathrm{interior} (D(V_j))=\emptyset$, if $i\not=j$,
 containing all the vertices of $V_i$ in its interior, and (ii)
 every edge of $G$ intersects the boundary of $D(V_i)$ at most once for every $D(V_i)$.
A c-graph  $(G,T)$ with a given combinatorial embedding of $G$ is \emph{c-planar} 
if additionally the embedding is combinatorially described as given.
 A \emph{clustered drawing and embedding} of a flat clustered graph $(G,T)$ is a drawing and embedding, respectively,
 of $G$ satisfying (i) and (ii).
In 1995
 Feng, Cohen and Eades~\cite{FCEa95,FCEb95} introduced the notion of clustered planarity for clustered graphs, shortly c-planarity, (using, a more general, hierarchical clustering)
as a natural generalization of graph planarity. (Under a different name
Lengauer~\cite{L89} studied a similar concept in 1989.)

\begin{wrapfigure}{r}{.5\textwidth}
  \centering
\centering
\subfloat[]{
\includegraphics[scale=0.4]{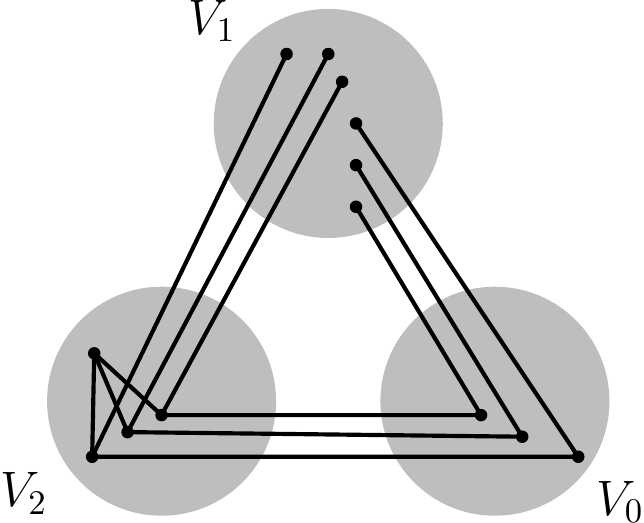}
    	}
\subfloat[]{
\includegraphics[scale=0.4]{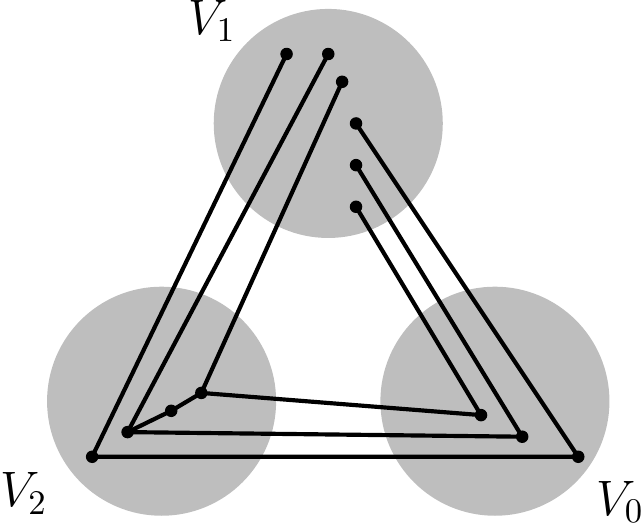}
		}
\caption{A c-graph that is not c-planar (left); and a c-planar c-graph (right).}
\label{fig:treeEx}
\end{wrapfigure}

By slightly abusing the notation for the rest of the paper $G$ denotes  a flat c-graph $(G,T)=(V_0 \uplus V_1 \uplus \ldots \uplus V_{c-1}, E)$ with $c$ clusters
$V_0,V_1,\ldots $ and $V_{c-1}$, and a given combinatorial embedding, and we assume that $G$ is \emph{cyclic}~\cite[Section 6]{FKMP15}. Thus, every $e=uv$ of $G$ is such that
$u\in V_i$ and $v\in V_j$ where $j-i \mod c \le 1$ and for every
$i$ there exists an edge in $G$ between $V_i$ and $V_{i+1 \mod c}$.
In the case of three clusters, the first condition is redundant.
If the second condition is violated, the problem was essentially solved for three clusters as discussed in
Section~\ref{sec:wnwn}.
We assume that $G$ is connected, since in the problem that we are studying, the connected components of $G$ can be treated separately. Indeed, \iflong as we show in Section~\ref{sec:fan} \fi without loss of generality  we  assume throughout the paper that in a clustered embedding of $G$ the clusters are unbounded wedges defined by pairs of rays emanating from the origin (see Figure~\ref{fig:wedges}) that is disjoint from all the edges (see Appendix). We call such a clustered drawing  a \emph{fan drawing}. \\

Thus, a connected component in a clustered embedding 
can be drawn so that it is disjoint from a ball $B$ centered at the origin of radius $\epsilon>0$
for any $\epsilon$. The rest of the graph is then embedded inductively inside $B$.
The aim of the present work is to prove the following.

\begin{theorem}
There exists a quadratic-time algorithm in $|V(G)|$ to test if a cyclic c-graph $(G,T)$ is c-planar.
\end{theorem}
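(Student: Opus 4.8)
The plan is to reduce the existence of a clustered embedding of $(G,T)$ to the existence of a perfect matching in an auxiliary planar bipartite graph, and then to test the latter. First I would pass to the normalized \emph{fan drawing} described above: each cluster $V_i$ occupies a wedge at the origin, the wedges are cyclically arranged $V_0,\dots,V_{c-1}$, the graph $G$ is disjoint from a small ball $B$ around the origin, and every edge between $V_i$ and $V_{i+1\bmod c}$ crosses the ray $r_i$ separating the two wedges. In this picture a clustered embedding of $(G,T)$ compatible with the given combinatorial embedding of $G$ is nothing more than the addition to $G$ of a set of edges, each drawn inside a single face of $G$ and joining two vertices of the same cluster, after which every induced subgraph $G[V_i]$ becomes connected; that connectedness of the clusters then suffices to fatten the wedges into the discs $D(V_i)$ is exactly where the cyclic hypothesis enters. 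It is convenient to relax this further, via a Hanani--Tutte-type argument, to the existence of an \iocro{} clustered drawing (one in which every two independent edges cross an even number of times), which is equivalent to c-planarity. Hence it is enough to decide whether such a family of ``saturating'' chords exists.

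The subtlety is that inserting a chord into a face subdivides it, so the chords placed inside one face must be pairwise non-crossing and, worse, chords serving different clusters compete for room inside the same face: whether two parts of $G[V_i]$ can be merged through a given face depends on what is being done there for the other clusters. The second step is therefore a face-local analysis. Because $G$ is cyclic, the facial walk of a face $f$ projects to a closed walk in the cycle obtained by contracting the clusters (a triangle when $c=3$), and its winding number $\wn$ around that cycle is the invariant that governs which collections of merges can be realized simultaneously inside $f$. I would show that it suffices to consider merges that attach a part of $G[V_i]$ to a vertex of $G[V_i]$ lying ``across a wedge'' of a facial walk, classify which such attachments are jointly feasible inside a single face (certifying local feasibility by an even-crossing argument), and show --- this is the heart --- that attachments that are individually feasible in their own faces can be combined into a genuine clustered embedding exactly when one cyclic compatibility condition holds.

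The third step encodes this condition as a perfect matching. One side of the auxiliary graph $H$ consists of the parts of the subgraphs $G[V_i]$ that must still be attached to the rest of their cluster; the other side consists of the ``slots'' --- wedges of facial walks --- at which an attachment can be carried out; and $H$ has an edge between a part and a slot whenever attaching the former at the latter is feasible and consistent with $\wn$. The claim to prove is that $H$ admits a perfect matching if and only if all parts can be attached at once without conflict, i.e.\ if and only if $(G,T)$ is c-planar. Planarity of $H$ is inherited from the planar embedding of $G$ together with the locality of these incidences, and bipartiteness is built in; $H$ has $O(|V(G)|)$ vertices and can be assembled in $O(|V(G)|^2)$ time, and a perfect matching in a planar bipartite graph can be found within that bound, giving the quadratic running time.

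The main obstacle I anticipate is the passage from local to global in the second and third steps: showing that the face-by-face feasibility conditions are not merely necessary but also sufficient --- that non-crossing chord sets chosen independently inside the various faces can always be installed simultaneously and completed to an honest clustered embedding --- and that the resulting global condition is captured exactly by a perfect matching rather than by something strictly more expressive (which would be a symptom of hardness). Pinning down the invariant $\wn$ and the correct notion of ``slot'', so that the correspondence with matchings is both sound and complete, is where the real work lies.
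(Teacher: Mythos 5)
Your high-level architecture coincides with the paper's (fan drawings, the winding number $\wn$ as the governing invariant, reduction to a perfect matching in a planar bipartite incidence graph, quadratic time via planar bipartite matching), but the proposal has genuine gaps exactly at the points you yourself flag as ``the heart''. First, your bipartite graph is not the right one, and it is not clear a perfect matching in it characterizes anything: you match \emph{components of $G[V_i]$ still to be attached} against \emph{wedge slots}, but connecting a forest with $k$ components requires $k-1$ chords, a component may have to be attached through several faces, and the chords serving different clusters inside one face interact; so ``one slot per part'' is neither obviously necessary nor obviously sufficient, and no argument is offered. The paper resolves this by a normalization you do not have: contract each component of $G[V_i]$ so clusters become independent sets, and repeatedly subdivide non-simple faces by $\gamma$-monotone paths so that every face is \emph{simple} or \emph{semi-simple} (Lemma~\ref{lemma:norm} shows this preserves c-planarity and keeps $\wn(f')=\wn(f)$, $\wn(f'')=0$). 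Only after that does a perfect matching become the exact condition, and it is a different matching: sinks/sources of the label-oriented graph $\overrightarrow{G}$ versus the faces that must host their unique \emph{concave wedge} (each sink/source has exactly one concave wedge, each semi-simple face needs exactly one, simple faces none, $f_o,f_o'$ possibly one) — that counting is what makes ``perfect matching'' sound and complete.

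Second, the local-to-global sufficiency you leave open is precisely the content of Lemmas~\ref{lemma:keyFact1} and~\ref{lemma:keyFact2}: after inserting the matched edges (eliminating all sinks and sources) and then a maximal non-crossing set $E''$ of intra-cluster edges, one uses the identity $(*)$ together with the fact (Lemma~\ref{lemma:keyFact0}) that exactly one inner face has nonzero winding number to show that any hypothetical cluster cycle, or any face separating two pieces of one cluster, would force a directed closed walk $C'$ with $\wn(C')>0$ inside it, contradicting $\wn(C)=0$; hence each $(G'\cup E'')[V_i]$ is a tree and the clustered embedding follows. Your proposal contains no substitute for this argument. Finally, the detour through a Hanani--Tutte-type relaxation to independently-even clustered drawings is both unjustified (such a characterization is not known at this level of generality and you do not prove it) and unnecessary — the paper never needs it — and you also omit the necessary preconditions of Lemma~\ref{lem:2} (no face with $|\wn(f)|>1$, at most one inner face with $|\wn(f)|=1$) and the special treatment of the two faces $f_o,f_o'$ with nonzero winding number, which are needed before any matching formulation can even be stated.
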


{\bf Further research directions.}
%
We think that our technique should be extendable by means of Euler's formula to resolve the c-planarity in more general situations than the one treated in the present paper. In particular, we suspect that
the technique should yield a generalization of the characterization of strip planar clustered graphs~\cite[Section 5]{F14+}. 
That would allow us to work with graphs without a fixed embedding. We mention that the tractability in a special case 
of our problem known as cyclic level planarity, when the embedding is not fixed, follows from a recent work of Angelini et al.~\cite{angelini2015beyond}.

{\bf Organization.} In Section~\ref{sec:pre} we introduce concepts used in the proof of
our result. We give an outline of our approach in Section~\ref{sec:out}.
A more detailed description and a proof of correctness of our algorithm is in Section~\ref{sec:alg}.

\section{Preliminaries}
\label{sec:pre}

\iflong
\subsection{Fan drawings}
\label{sec:fan}
We show that the clusters can be drawn as regions, each bounded by a pair of rays emanating from the origin.
Suppose that $G=(V_0\uplus \ldots \uplus V_{c-1},E) $ is given by a clustered embedding
living in the $xy$ plane of $\mathbb{R}^3$.
We assume that boundries of discs representing clusters do not touch.
Consider a stereographic projection from the north pole of a two-dimensional sphere $S$ 
sitting at the origin of $\mathbb{R}^3$.
Let $D$ be a stereographical pre-image of the embedding of $G$ on $S$.
Let $S'$ denote the union of $G$ (as a topological space) with the boundaries of the clusters in $D$.
Let $R_n$ and $R_s$ be a connected component of the complement of $S'$ in $S$, respectively, containing the north pole and south pole.
If necessary, we apply an isotopy  to $D$ (a continuous deformation keeping $D$ to be a clustered embedding all the time)  so that in the resulting embedding $D$ of $G$ on $S$ every boundary of a cluster intersects (in fact touches) the closure of $R_n$ and  the closure of $R_s$. 

We show that a desired isotopy exists. We contract every cluster to a point thereby
treating clusters as vertices in an embedding $D'$ of a cycle $C$ of length $c$ having multi-edges. 
Formally, this can be viewed as a quotient $S/\sim$, where $x\sim y$ iff
$x$ and $y$ belong to the same cluster.
In $D'$ there must be a pair of faces $f$ and $f'$ whose facial walk is $C$ since any cycle in the corresponding multi-graph is obtained as a symmetric difference of facial walks. Apply an isotopy to $D'$ such that $f$ contains
the north pole in its interior and $f'$ contains the south pole in its interior. Finally, we decontract clusters in the end. The above procedure can be easily turned into an isotopy of $D$. 

By projecting the resulting spherical embedding back to the plan we can also assume that we have a clustered embedding of $G$
such that clusters are represented by small discs of diameter $\epsilon>0$ each drawn in a close vicinity
of a different vertex of a regular convex $c$-gon with the center at the origin, and the edges
between clusters $V_i$ and $V_{i+1 \mod c}$, for every $i$, are closely following the edge
of the $c$-gon between the corresponding pair of vertices. 
The desired rays bounding clusters are those from the origin orthogonal to the sides of the $c$-gon. \\
\fi

\subsection{Outline of the approach}
\label{sec:out}
By~\cite[Theorem 1]{FCEb95} deciding c-planarity of instances $G$ in which all $G[V_i]$'s are connected amounts to 
checking if an outer face of $G$ can be chosen so that every $V_i$ is embedded in the outer face 
of $G[V\setminus V_i]$. On the other hand, once we have a clustered embedding of $G$ we can augment $G$ by adding edges drawn inside clusters without creating an edge-crossing so that clusters become connected.
These observations suggest that c-planarity of $G$ could be viewed as a connectivity augmentation problem, for example as in~\cite{CBFK14+,FKMP15},
in which we want to decide if it is possible to make clusters connected  while maintaining the planarity of $G$.
One minor problem with this viewpoint is the fact that if $G$ is c-planar we do not allow
a cluster $V_i$ to induce a cycle such that clusters $V_j$ and $V_{j'}$, $i\not=j,j'$, are drawn on its opposite 
sides. However, this cannot happen  if $G$ is cyclic.
Following the above line of thought our algorithm tries to augment $G$ by subdividing its faces with
paths and edges. We proceed in two steps. In the first step, Section~\ref{sec:norm}, we either 
detect that $G$ is not c-planar or similarly as in~\cite{ADDF13} and~\cite{F14+} by 
turning clusters into independent sets and adding certain paths we normalize the instance. In the second step, Section~\ref{sec:const}, we decide if the normalized instance can be further augmented by  edges as desired.

In order to prove the correctness of the second step of the algorithm
we use the notion of the \emph{winding number} $\wn(W)\in \mathbb{Z}$ of a walk $W$ of $G$, as defined
in Section~\ref{sec:wnwn}. The parameter $\wn(W)$ says how many times and in which sense 
a walk $W$ of $G$ winds around the origin in a clustered drawing of $G$.
Thus, $G$ is not c-planar if there exists a face $f$ such that for its facial walk $|\wn(f)|>1$ or
if there exists at least two inner faces $f$ with $|\wn(f)|>0$.
However, it can be easily seen that this necessary condition of c-planarity is not sufficient
except when $G$ is a cycle~\cite{CDPP05}.
The  necessary condition allows us to reduce the c-planarity testing problem of a normalized instance to the problem
of finding a perfect matching in an auxiliary face-vertex incidence graph which is polynomially solvable.
The novelty of our work lies in the use of the winding number in the context of connectivity augmentation guided
by the flow and matching in the auxiliary face-vertex incidence graph \`a la~\cite{ADDF13} and~\cite{F14+}, respectively. 

We remark that the approach of~\cite{ADDF13} via a variant of upward embeddings
for directed graphs in our settings has several problems that seem quite hard to overcome,
the main one being the fact that the result of Bertolazzi et al.~\cite{BBLM94} does not extend, at least not in a natural way, to the drawings on the rolling cylinder, see e.g.,~Auer et al.\cite{Auer201536} for the definition of these drawings.
We are not aware of a polynomial-time algorithm for the corresponding  problem,
nor a corresponding \cNP-hardness result, and
find the corresponding algorithmic question interesting and related to our problem. 

\begin{figure}
  \centering
\centering
\subfloat[]{
\label{fig:wedges}
\includegraphics[scale=0.5]{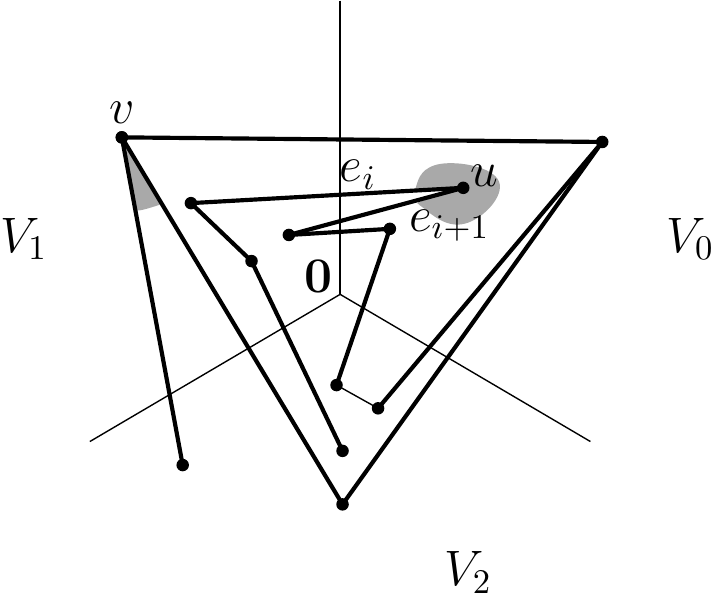}
    	}
    	\hspace{10px}
    	\subfloat[]{
    	\label{fig:semiSimple}
    	\includegraphics[scale=0.5]{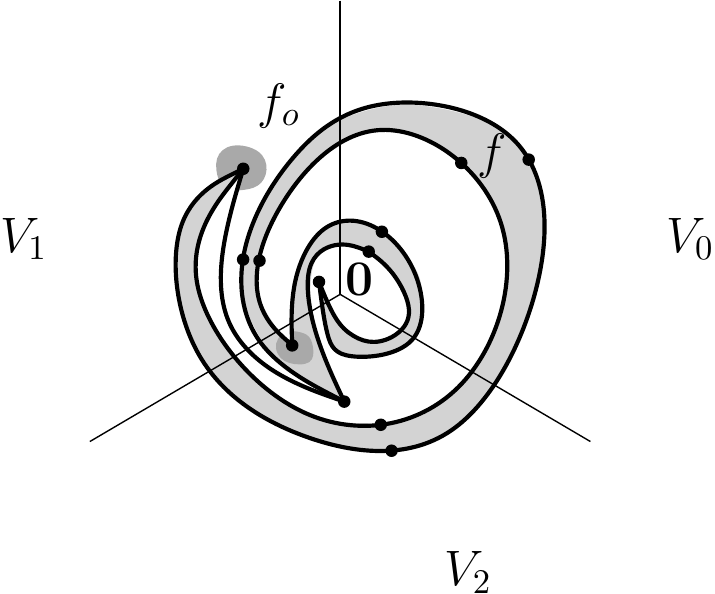}
    	}
\caption{(a) A clustered graph $G=(V_0 \uplus V_1 \uplus V_2,E)$ with clusters represented by wedges bounded by rays meeting at the origin. The highlighted wedge at $u$ is concave and at $v$ convex. (b) A semi-simple face $f$
and the outer face $f_o$ with an incident concave wedge.}
\end{figure}

\subsection{Winding number} 
\label{sec:wnwn1}
We define the winding number $\wn(W)$ of a closed oriented walk $W$ in a drawing disjoint from the origin of a graph $G$ (possibly with crossings). In what follows  facial walks are understood with the orientations as in
  an embedding of $G$ with the given rotations and a face $f_o$ being a designated outer face.
By viewing a closed walk $W$ in the drawing  as a continuous  function $w$ from the unit circle $S^1$ to $\mathbb{R}^2\setminus {\bf 0}$,
the winding number $\wn(W)\in \mathbb{Z}$ corresponds to the element of the fundamental group of  $S^1$~\cite[Chapter 1.1]{Hatch02}  represented by $\frac{w(x)}{||w(x)||_2}$.
Let $W_1$ and $W_2$ denote a pair of oriented closed walks meeting in a vertex $v$.
Let $W$ denote the closed oriented walk from $v$ to $v$ obtained by concatenating $W_1$ and $W_2$.
By the definition of $\wn$ we have $\wn(W)=\wn(W_1)+\wn(W_2)$.
Let $f_1$ and $f_2$, $f_o\not = f_1, f_2$, denote a pair of faces of $G$ whose walks intersect in a single walk.
Let $G'$ denote a graph we get from $G$ by deleting edges incident to both $f_1$ and $f_2$.
Let $f$ denote the new face thereby obtained. Since $f_1$ and $f_2$ intersect in a single walk, the boundary of $f$ is connected.
In the drawing of $G'$ inherited from the drawing of $G$ we have $\wn(f)=\wn(f_1)+\wn(f_2)$, since common edges of $f_1$ and $f_2$ are traversed
in opposite directions by $f_1$ and $f_2$.
A face  or a vertex is in the interior of a closed walk $W$ in $G$ if it is in the interior of 
a cycle induced by the edges of $W$ in an embedding of $G$ with the given rotations and  $f_o$ as the outer face.
The previous observation is easily generalized by a simple inductive argument 
as follows \\
$ {\bf (*)} \ \ \ \   \sum_f \wn(f)=\wn(W)$ \\
 where we sum over all  faces~$f$ of $G$ in the interior of the
closed walk $W$ in $G$. In particular, $\sum_f \wn(f)=\wn(f_o)$, where we sum over all 
 faces  $f\not=f_o$ of $G$.

\subsection{Labeling vertices} 
\label{sec:wnwn}
Let $\gamma:V \rightarrow \{0,1,\ldots c-1\}$ be a labeling of the vertices $V$ by integers
such that $\gamma(v) = i$ if $v\in V_i$.
 Let $W$ denote an oriented closed  walk in a clustered drawing of  $G$. We put $\length(W)=
\sum_{{v'u'}\in E(W)} g(\gamma(u')-\gamma(v'))$,
 where $g(0)=0, \ g(1)=g(1-c)=1$ and $g(c-1)=g(-1)=-1$.
We have the following.

\begin{lemma}
\label{lemma:wn}
For a walk $W$ in a fan drawing of $G$ we have $\wn(W)=\length(W)/c$.
\end{lemma}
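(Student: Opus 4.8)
The plan is to measure how the closed walk $W$ wraps around the origin in two ways and compare them: by its topological winding number $\wn(W)$, and by the signed number of times $W$ crosses the $c$ rays that bound the clusters in a fan drawing. Recall that in a fan drawing the clusters are $c$ closed wedges $\Theta_0,\dots,\Theta_{c-1}$ issuing from the origin with $V_i\subseteq\mathrm{int}\,\Theta_i$, consecutive wedges $\Theta_{i-1}$ and $\Theta_i$ (all indices mod $c$; we assume $c\ge 3$, which is anyway needed for $g$ to be well defined) meeting along a single ray $\rho_i$, and $\Theta_i$ preceding $\Theta_{i+1}$ in the counter-clockwise cyclic order. Since $G$ is cyclic, every edge joins two vertices of a single cluster $V_i$, or joins $V_i$ to $V_{i+1\bmod c}$; the first step is the following \emph{confinement property}: in the former case the edge stays inside $\Theta_i$ and crosses no ray, and in the latter it stays inside $\Theta_i\cup\Theta_{i+1}$, crosses $\rho_{i+1}$ exactly once (in the counter-clockwise sense), and crosses no other ray. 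This is the only place where the drawing enters, and the argument for it must invoke condition~(ii) of a clustered embedding for \emph{every} cluster simultaneously: if such an edge ever entered a wedge $\Theta_m$ containing neither of its endpoints, it would afterwards have to leave $\Theta_m$ again, producing two crossings of $\partial\Theta_m$ and contradicting~(ii) for $V_m$; once the edge is confined to the wedge(s) of its endpoints the stated counts are forced, since it must pass from one endpoint's wedge to the other's across the unique shared ray and may cross it at most once.

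Next, for a ray $\rho$ from the origin and an oriented arc $\beta$ in $\mathbb R^2\setminus\{{\bf 0}\}$, let $x_\rho(\beta)\in\mathbb Z$ be the net number of crossings of $\rho$ by $\beta$ (a crossing from the clockwise side of $\rho$ to the counter-clockwise side counted $+1$, the reverse counted $-1$); this is additive under concatenation. Two facts combine. First, for the closed walk $W$ and \emph{every} ray $\rho$ from the origin, $x_\rho(W)=\wn(W)$: this is the familiar description of the winding number as the signed number of crossings of a generic ray, immediate from the continuous lift $\tilde\theta$ of the angular coordinate along $W$, whose total increment is $2\pi\wn(W)$ and whose net passages through the direction of $\rho$ are exactly the signed crossings of $\rho$. (Here $W$ meets the rays only in the interiors of its inter-cluster edges, since the $V_i$ lie in open wedges and the drawing avoids the origin.) Second, writing $W=w_0e_1w_1\cdots e_kw_k$, the confinement property gives that $\sum_\rho x_\rho(e_j)$, summed over the $c$ cluster rays, equals $0$ if $e_j$ is intra-cluster, $+1$ if $e_j$ runs from $V_i$ to $V_{i+1\bmod c}$, and $-1$ if it runs from $V_i$ to $V_{i-1\bmod c}$ — in every case exactly $g(\gamma(w_j)-\gamma(w_{j-1}))$. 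Summing over $j$ and over all $c$ rays and using additivity of $x_\rho$,
\[
\sum_{\rho}x_\rho(W)=\sum_{j}g\bigl(\gamma(w_j)-\gamma(w_{j-1})\bigr)=\length(W),
\]
while by the first fact the left-hand side equals $c\cdot\wn(W)$; hence $\wn(W)=\length(W)/c$.

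I expect the confinement property to be the only genuine obstacle: one must reason carefully about how a crossing-free edge can interact with the entire family of $c$ rays, and see that the ``at most one crossing with each cluster disc'' hypothesis, used simultaneously for all clusters, prevents an edge from drifting into a third wedge. Everything afterwards is a routine winding-number computation.
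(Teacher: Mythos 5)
Your proof is correct and follows essentially the same route as the paper: it identifies $\wn(W)$ with the signed number of crossings of each separating ray and sums over the $c$ rays to get $c\cdot\wn(W)=\length(W)$. The only difference is that you spell out the confinement property of edges within their wedges (via condition~(ii) of a clustered drawing), which the paper's proof uses implicitly.
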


\begin{proof}
The number of times
the walk $W$ crosses the ray between $V_i$ and $V_{i+1 \mod c}$ from right
to left w.r.t. to the direction of the ray is $\wn_i^+(W)=\sum_{{v'u'}} g(\gamma(u')-\gamma(v'))$, 
where we sum over the edges $v'u'$ in the walk $W$, where
 $v'\in V_i$ immediately precedes $u'\in V_{i+1 \mod c}$ in the walk.
 Similarly, we define \\  $\wn_i^-(W)=\sum_{{v'u'}} g(\gamma(u')-\gamma(v'))$, 
where we sum over the edges $v'u'$ in $W$, where
 $v'\in V_{i+1 \mod c}$ immediately precedes $u'\in V_{i}$ in the walk.
We have, $\wn(W) = \wn_i^+(W) + \wn_i^-(W)$
which in turn implies $c\cdot\wn(W) = \sum_{i} (\wn_i^+(W) + \wn_i^-(W))=\length(W) $.
\qed\end{proof}

By the previous lemma $\wn(W)$ is determined already by the c-graph $G$ and is the same in all  clustered drawings of $G$, and hence, putting $\wn(W):=\length(W)/c$, for a walk $W$ with a fixed orientation, allows us to speak about $\wn(W)$ without
referring to a particular drawing of $G$.
 Thus, $\wn(W)$ tells us the winding number of $W$ in any clustered 
drawing.
By Jordan-Sch\"onflies theorem $G$ the following holds. 

\begin{lemma}
\label{lem:2}
$G$ is not c-planar if there exists a face $f$ such that $|\wn(f)|>1$ or if there exists more than one inner face $f'$ with $|\wn(f')|=1$.
\end{lemma}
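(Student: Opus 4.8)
The plan is to use the key identity $(*)$ from Section~\ref{sec:wnwn1}, namely $\sum_{f \neq f_o} \wn(f) = \wn(f_o)$ where the sum runs over all inner faces, together with the observation that $\wn(W)$ is a well-defined integer for every closed walk (Lemma~\ref{lemma:wn}). First I would suppose, for contradiction, that $G$ is c-planar, so it admits a clustered embedding in which each cluster $V_i$ lies in a wedge emanating from the origin (a fan drawing). In such a drawing every face $f$ of $G$ has a facial walk $W_f$ disjoint from the origin, so $\wn(f) := \wn(W_f) \in \mathbb{Z}$ is defined, and by Lemma~\ref{lemma:wn} this value is the same in every clustered drawing, in particular it is an intrinsic quantity of the c-graph $G$.

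The heart of the argument is a topological claim: in any clustered embedding of $G$ (a planar, i.e. crossing-free, drawing disjoint from the origin), at most one face has nonzero winding number, and that face has winding number in $\{-1, 0, 1\}$. For this I would argue as follows. The origin lies in the interior of exactly one face $f^*$ of the embedding — either an inner face or the outer face $f_o$. By the Jordan–Sch\"onflies theorem, every inner face of a planar embedding is an open topological disc, and a closed curve traced on the boundary of a topological disc not containing the origin is null-homotopic in $\mathbb{R}^2 \setminus \{\mathbf{0}\}$; hence $\wn(f) = 0$ for every inner face $f \neq f^*$. If $f^* = f_o$, then all inner faces have winding number $0$, and by $(*)$ also $\wn(f_o) = 0$; if instead $f^* \neq f_o$ is an inner face, then $(*)$ gives $\wn(f^*) = \wn(f_o)$, and again all other inner faces have winding number $0$. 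Thus the only faces that can possibly have nonzero winding number are $f^*$ and $f_o$, with $\wn(f^*) = \wn(f_o)$; and when $f^* = f_o$ there is no inner face with nonzero winding number at all. It remains to bound $|\wn(f^*)|$ by $1$: the facial walk $W_{f^*}$ is the boundary of the disc (complement, on the sphere) region immediately surrounding the origin, so the origin is encircled exactly once, giving $|\wn(f^*)| \le 1$. Consequently, if some face $f$ satisfies $|\wn(f)| > 1$, or if two distinct inner faces $f', f''$ both satisfy $|\wn(f')| = |\wn(f'')| = 1$, then no clustered embedding can exist, i.e.\ $G$ is not c-planar.

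I expect the main obstacle to be making the claim ``$|\wn(f^*)| \le 1$'' fully rigorous: one must argue carefully that the face containing the origin, when cut out, leaves a region that is an annulus whose core circle links the origin exactly once, rather than potentially wrapping multiple times. This is intuitively clear from the Jordan curve theorem applied on the sphere (the boundary of $f^*$, traversed once, separates the origin from $f_o$'s side and is a single closed curve of index $\pm 1$), but it requires invoking that each facial walk, although possibly not a simple cycle, still bounds a disc on one side in the embedding, so its winding number about any point in the complementary region is $0$ or $\pm 1$. Once that point is granted, everything else is bookkeeping with the additivity identity $(*)$ and the integrality of $\wn$.
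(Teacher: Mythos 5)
Your proposal is correct and follows essentially the same route as the paper: in a crossing-free clustered drawing every facial winding number has absolute value at most one, and a face with nonzero winding number must contain the origin, which can lie in (the interior of) only one inner face; the paper's proof states exactly these two facts, just more tersely, while you additionally invoke the identity $(*)$ to handle the outer face. The point you flag as the ``main obstacle'' ($|\wn(f^*)|\le 1$) is precisely what the paper asserts without further elaboration, so no essential idea is missing.
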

\begin{proof}
In a crossing free drawing  $|\wn(f)|\le 1$ for every face $f$.
If $|\wn(f')|=1$ the origin ${\bf 0}$ lies in the interior of $f'$ since
otherwise the facial walk is null-homotopic, i.e., homotopic to a constant map, in $\mathbb{R}^2\setminus {\bf 0}$ (contradiction). However, interiors of faces are disjoint.
\qed\end{proof}
 If $\wn(f)=0$ for all faces $f$,~\cite[Lemma 1.2]{F14+} extends easily to this case,
reducing the problem to the work of Angelini et al.~\cite{ADDF13}.
Thus, by Lemma~\ref{lem:2} and for the sake of simplicity of the presentation, throughout the paper we assume that there exists a pair  of faces $f_o,f_o'$, $\wn(f_o)=\wn(f_o')\not=0$  (by
~$(*)$ there cannot be just one such face) one of which, let's say $f_o$,
we designate as an \emph{outer face}.  The roles of $f_o$ and $f_o'$ are, in fact, interchangeable.
Also such a restriction is by no means crucial in our problem, and alternatively, it is always possible
to choose and subdivide the outer face in the normalized instance (defined later) by a path so that the restriction is satisfied.

Viewing a facial walk $f$ as a sequence of vertices and edges $w_0e_0w_1e_2\ldots e_mw_m$, where $e_{i-1}=w_{i-1}w_i$,
let $V_f$ be the set $\{w_0,\ldots, w_m\}$ of \emph{vertex occurrences} along~$f$.
We treat $V_f$ also as a multi-set of vertices, and thus, $\gamma$ is defined on its elements.
Let $\gamma_f:V_f \rightarrow \mathbb{N}$, for $f\not=f_o,f_o'$, be a labeling of the elements of $V_f$ by integers defined as follows.
We mark all the vertex occurrences  in $V_f$ as unprocessed.
We pick an arbitrary vertex occurrence  $v\in V_f$, set $\gamma_f(v):=\gamma(v)$
and mark $v$ as processed.
We repeatedly pick an unprocessed vertex occurrence  $u\in V_f$ that has its predecessor or successor $v$ along the boundary walk of $f$ in $V_f$  processed.
 We put  $\gamma_f(u):=\gamma_f(v)+g(\gamma(u)-\gamma(v))$.
 Intuitively, $\gamma_f$ records 
 the distance  in terms of ``winding around origin'' of vertex occurrences 
 along the boundary walk of $f$ from a single chosen vertex occurrence.
 Since $\wn(f)=0$ the function  $\gamma_f(u)$ is completely determined by
the choice of the  first occurrence of a vertex we processed. 
This choice is irrelevant for our use of $\gamma_f$ as we see later.
Also notice that $\gamma(v) = \gamma_f(v) \mod c$ for all vertices incident to $f$.

A normalized instance allows only the faces of the types defined next.
An element $v$ in $V_f$ is a \emph{local minimum} (\emph{maximum}) of a face $f$ if in the  facial walk  $f$ the value of $\gamma(v)$ is not bigger (not smaller)
with respect to the  relation $0<1<\ldots <c-1<0$  than the value of its successor and predecessor.
A walk $W$ in $G$ is \emph{(strictly) monotone with respect to $\gamma$} if the labels of the occurrences of vertices on $W$ form a (strictly) monotone sequence
with respect to the  relation $0<1<\ldots <c-1<0$ when ordered
in the correspondence with their appearance on  $W$.
The  face $f$ is \emph{simple} if $f$ has at most one local minimum. It follows 
that a simple face $f$ has also at most one local maximum.
The inner face   $f\not=f_o'$ is \emph{semi-simple}  if $f$ has exactly two local minima and maxima and these minima and maxima, respectively,  have the same $\gamma_f$ value.

\section{Algorithm}
 \label{sec:alg}

A cyclic c-graph $G$ is \emph{normalized } if 

\noindent
(i) $G$ is connected; \\
(ii) each cluster $V_i$ induces an independent set; and \\
(iii) each face of $G$ is simple or semi-simple, and $f_o$ and $f_o'$ are both simple. 

Suppose that (i)--(iii) are satisfied. 
By~(ii) we  put directions on all the edges in $G$ as follows.
Let $\overrightarrow{G}$ denote the directed c-graph obtained from $G$ by orienting every edge
$uv$ from the vertex with the smaller label $\min (\gamma(u), \gamma(v))$ to the vertex with the bigger label $\max (\gamma(u), \gamma(v))$ with respect to the relation $0<1<\ldots < c-1<0$.
A \emph{sink} and \emph{source} of $\overrightarrow{G}$ is
a vertex with no outgoing and incoming, respectively, edges.

Let $e$ denote an edge of $G$ not contained in a single 
cluster.
Given a clustered embedding $\mathcal{D}$ of $G$ let ${\bf p_{e}}:={\bf p_{e}}(\mathcal{D})$ denote the intersection point of $e$ with a ray separating a pair of clusters.
Let $e_0,\ldots,e_{k-1}$ be  the  edges
 incident to a sink or source $u$.
By Jordan curve theorem it is not hard to see that (i)--(iii) imply that a clustered embedding $\mathcal{D}$ of $G$ is ``combinatorially'' determined once we order the set 
  of intersection points
 ${\bf p_{e_0}}, \ldots , {\bf p_{e_{k-1}}}$ along
 rays separating clusters for every sink and sources $u$ in $G$. Moreover, the set of intersection points corresponding to a sink or source $u$ admits in
 an embedding only orders that are cyclic shifts of
 one another, since we have the rotations at vertices 
 of $G$ fixed.
 The wedge in $\mathcal{D}$ formed by a pair of edges $e_i$ and $e_{i+1}$ incident to a face $f$ at its local extreme $u$ is \emph{concave} (see Figure~\ref{fig:wedges} 
 for an illustration) if 
$u$ is a sink or source of $\overrightarrow{G}$ and
the line segment ${\bf p_{e_i}}{\bf p_{e_{i+1}}}$ contains all the other points ${\bf p_{e_j}}$ or
in other words the order of intersection points
corresponding to $u$ is ${\bf p_{e_{i+1}},p_{e_{i+2}}, \ldots, p_{e_{k-1}},p_{e_{0}}, \ldots , p_{e_i}}$.
A non-concave wedge is \emph{convex}.\iflong\footnote{The terminology comes from the fact that in an embedding that is ``combinatorially the same'' with straight line edges, wedges become convex and concave in the usual sense.}\fi
Note that in $\mathcal{D}$ every sink or source is
incident to exactly one concave wedge that in
turn determines the order of intersection points.
Thus, combinatorially $\mathcal{D}$ is also determined
by a prescription of concave wedges at sink and sources.

Let  $S$ be the set of sinks and sources of $\overrightarrow{G}$. Let $F$ denote the union of the set of 
semi-simple faces of $G$ with a subset of $\{f_o,f_o'\}$ containing faces incident to a sink and a source.
We construct a planar bipartite graph $I=(S\cup F, E(I))$ with parts $S$ and $F$,
 where $s\in S$ and $f\in F$
is joined by an edge if $s$ is incident to $f$. 
Given that (i)--(iii) are satisfied, the existence of a perfect matching $M$ in $I$ is a necessary condition for $G$ being c-planar. Indeed, as we just said,
in a clustered embedding, each source or sink has exactly one of its wedges concave.
On the other hand, by Jordan curve theorem  it  can be easily checked that in the clustered embedding  \\
{\bf (A)} every semi-simple face is incident to exactly one concave wedge \\
{\bf (B)} faces $f_o$ and $f_o'$ are incident to one concave wedge if they are incident to a sink and source, and \\ {\bf (C)} all the other faces are not incident to any concave wedges at the minimum and maximum. \\
This is fairly easy to see if $G$ is vertex two-connected, see Figure~\ref{fig:semiSimple} for an illustration.
The cycle $C$ \emph{corresponding to a closed walk} is obtained
by traversing the walk and introducing a new vertex for each
vertex occurrence in the walk.
For a face $f$ incident to cut-vertices, {\bf (A)--(C)} follows by considering the cycle corresponding to the facial walk of $f$ (treated as a face) embedded in a close vicinity of the boundary of $f$. 
 Thus, a desired matching $M$ is obtained by matching each source or sink with
the face incident to its concave wedge.

We show in Section~\ref{sec:const}  that if $M$ exists $G$  is c-planar by augmenting 
$G$ with edges as described in Section~\ref{sec:out}.
Testing the existence, but even counting perfect matchings in a planar bipartite graph can be carried out in a polynomial time~\cite[Section 8]{L09}.

The running time of our algorithm is $O(|V|^2)$ since finding the perfect matching 
can be done in $O(|V|^2)$  time, due to $|E(I)|=O(|V|)$, and the pre-processing step including the construction of $I$ and the normalization will be easily seen to have this time complexity. Also computing the winding number for all the faces can be performed in a linear time by Lemma~\ref{lemma:wn}.
First, we explain and prove the correctness of the algorithm for
 instances satisfying~(i)--(iii). In Section~\ref{sec:norm},
 we show a polynomial-time reduction of the general case  to instances  satisfying~(i)--(iii). We often use Jordan--Sch\"onflies theorem without
 explicitly mentioning~it.

\iflong\else
\fi
\subsection{Constructing a clustered embedding}
 \label{sec:const}
 
\begin{wrapfigure}{r}{.5\textwidth}
\centering
\includegraphics[scale=0.65]{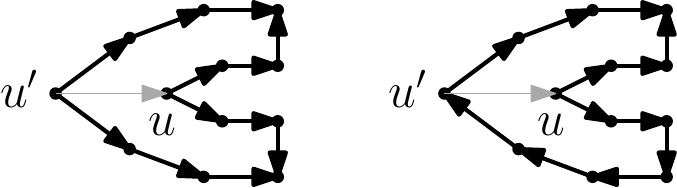}
\caption{Subdividing a semi-simple face   (left). Subdividing a simple face $f_o'$ (right).}

\label{fig:figfig}
\end{wrapfigure}

 Given a normalized instance $G$ and a matching $M$ between sources and sinks in $S$, and  faces in $F$ of $G$ we construct a  clustered embedding of $G$ as follows. Recall that we assume that $G$ does not have a face 
 $f$ with $|\wn(f)|>0$ besides $f_o$ and $f_o'$.
  We start with $\overrightarrow{G}$ defined
above and add edges to it thereby eliminating all the sinks and sources, see Figure~\ref{fig:figfig}.
Let $u\in S$ be a source matched in $M$ with $f$. If $f$ is a semi-simple inner face
let $u'$ denote another local minimum  incident to $f$. We add to  $\overrightarrow{G}$
  an edge $\overrightarrow{u'u}$ embedded in the interior of $f$. 
  If $f=f_o$ or $f=f_o'$ we join $u$ by $\overrightarrow{u'u}$ with the vertex in the same cluster $u'$
so that we subdivide $f$ into two simple faces $f'$ and $f''$ such that
$\wn(f')=0$ and $\wn(f'')=\wn(f)$. If $f=f_o$  face $f''$ is the new outer face. By Lemma~\ref{lemma:wn}, such a vertex $u'$ exists and it is unique.

%
%

We proceed with $u\in S$ that are sinks analogously thereby eliminating all the sinks and source in the resulting graph $\overrightarrow{G'}$, where by $G'$ we denote its underlying undirected graph.
By Lemma~\ref{lemma:wn}, there still exists exactly one inner face $f_o'$ with a non-zero winding number in the resulting graph $G'$.

\begin{lemma}
\label{lemma:keyFact0}
$G'$ has exactly one inner face $f_o'$ such that $|\wn(f_o')|=1$.
\end{lemma}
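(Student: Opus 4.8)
The plan is to track how the winding numbers of faces change under the edge-addition operations performed in this section, using the additivity property $(*)$ from Section~\ref{sec:wnwn1} together with Lemma~\ref{lemma:wn}. Recall the starting point: by Lemma~\ref{lem:2} and the standing assumption, $G$ has exactly two faces $f_o$ and $f_o'$ with $\wn(f_o)=\wn(f_o')\neq0$ (in fact, normalized instances have these two faces simple with $|\wn|=1$), and every other face has winding number $0$. The operation that needs analysis is: for each source/sink $u$ matched to a face $f$, we insert an edge inside $f$ joining $u$ to another vertex occurrence $u'$ of $f$, which splits $f$ into two faces $f'$ and $f''$. By $(*)$ applied in the ``reverse'' direction (merging $f'$ and $f''$ back into $f$, whose common boundary is the new edge, hence a single walk), we get $\wn(f')+\wn(f'')=\wn(f)$.

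First I would handle the case where $f$ is a semi-simple \emph{inner} face, so $\wn(f)=0$. Here we must check that the particular $u'$ chosen (the other local minimum of $f$, when $u$ is a source) yields a split in which \emph{both} new faces still have winding number $0$. This should follow from the structure of a semi-simple face: its two local minima have equal $\gamma_f$-value and its two local maxima have equal $\gamma_f$-value, so the facial walk decomposes at $u,u'$ into two monotone-type arcs, each of which, once closed off by the new edge $\overrightarrow{u'u}$, bounds a face whose $\length$ (hence winding number, by Lemma~\ref{lemma:wn}) is $0$. So semi-simple inner faces contribute nothing new with nonzero winding number, and crucially they are removed from the face set.

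Next I would handle $f=f_o$ and $f=f_o'$, the two faces with $\wn=\pm1$. Here the construction explicitly picks $u'$ to be ``the vertex in the same cluster as $u$'' along the appropriate monotone stretch, precisely so that one side $f'$ gets $\wn(f')=0$ and the other side $f''$ gets $\wn(f'')=\wn(f)$; the text asserts such a $u'$ exists and is unique by Lemma~\ref{lemma:wn}, and I would cite that. After processing $f_o$ we have replaced it by $f'$ (winding $0$) and $f''$ (winding $\wn(f_o)$, the new outer face), and similarly for $f_o'$. Then I would argue inductively over the sequence of operations: the multiset of winding numbers of the inner faces, counted with the understanding that $(*)$ is preserved at each step (the global sum $\sum_{f\neq f_o}\wn(f)=\wn(f_o)$ is an invariant), never creates a face with $|\wn|>1$, never destroys the unique inner face carrying the nonzero winding number coming from $f_o'$, and eliminates no other nonzero-winding face because there were none to begin with. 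Finally, by Lemma~\ref{lem:2} applied to $G'$ (or just by re-deriving it), at the end exactly one inner face $f_o'$ of $G'$ has $|\wn(f_o')|=1$ and all others have winding number $0$.

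The main obstacle I anticipate is the bookkeeping in the case $f=f_o'$ is \emph{itself} a face incident to a sink or source and thus gets subdivided: one must make sure that after subdividing $f_o'$ into $f'$ with $\wn=0$ and $f''$ with $\wn=\wn(f_o')$, it is the piece $f''$ that we continue to call $f_o'$, and that no \emph{later} operation (processing some other source/sink matched to a now-present piece of an old face) accidentally re-splits $f''$ in a way that moves or duplicates the nonzero winding number. This is really a consistency check that the matching $M$ assigns at most the ``right'' faces to sources/sinks — i.e. (A)–(C) guarantee that only semi-simple faces and possibly $f_o,f_o'$ are ever matched, so no winding-$0$ inner face other than those arising from $f_o,f_o'$ is ever touched, and each of $f_o,f_o'$ is split at most once because after splitting, the winding-$0$ part $f'$ is simple (no incident sink/source by construction) and only the piece $f''$ retains the nonzero winding, which is not re-matched. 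Making this last point airtight — that the process terminates with precisely one nonzero-winding inner face rather than zero or two — is where I would spend the most care, invoking the uniqueness clause of Lemma~\ref{lemma:wn} and the invariance of $(*)$.
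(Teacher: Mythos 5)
Your proposal is correct and takes essentially the same route as the paper: the paper justifies this lemma by exactly the winding-number bookkeeping you describe, namely that an edge added inside a semi-simple inner face joins two occurrences with equal $\gamma_f$-value and hence splits it into two faces of winding number $0$ (via Lemma~\ref{lemma:wn} and the additivity $(*)$), while $f_o$ and $f_o'$ are split so that $\wn(f')=0$ and $\wn(f'')=\wn(f)$ with the unique choice of $u'$. Your extra care about each face of $F$ being subdivided only once (since $M$ is a matching) and about which piece inherits the nonzero winding just makes explicit what the paper leaves implicit.
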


Since $\gamma(v) = \gamma_f(v) \mod c$ for every face $f\not=f_o,f_o'$ and $v$ incident to $f$, 
every edge we added joins a pair of vertices in the same cluster.

    \begin{lemma}
\label{lemma:keyFact1}
The induced sub-graph $G'[V_i]$ of (undirected) $G'$ does not contain a cycle for $i=0,1,\ldots, c-1$.
\end{lemma}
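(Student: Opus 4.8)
The plan is to argue by contradiction: suppose that for some cluster $V_i$ the graph $G'[V_i]$ contains a cycle $C$. Every edge of $G'$ lies entirely inside one cluster (this was just noted before the statement, using $\gamma(v)=\gamma_f(v)\bmod c$ together with the fact that the only edges added to $\overrightarrow G$ join a local minimum of a semi-simple or outer face to another vertex occurrence with the same $\gamma_f$-value, hence the same cluster); so $C$ is a cycle built from edges of $G$ lying in $V_i$ (of which there are none, since by normalization~(ii) each $V_i$ is independent) together with the newly added edges. Thus $C$ consists solely of edges we added during the elimination of sinks and sources. The first step is therefore to understand what these new edges look like geometrically: each such edge $\overrightarrow{u'u}$ is drawn inside a single face $f$ (a semi-simple inner face, or $f_o$, or $f_o'$) and has both endpoints on the facial walk of $f$; moreover it is drawn so as to connect the \emph{two} local minima (or the two local maxima) of that face $f$, and in the $f_o,f_o'$ case it splits the face into two simple faces. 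In particular, distinct added edges either lie in the interiors of distinct faces or are the (at most two) edges we place in $f_o$ and $f_o'$, and each face receives at most one added edge in its interior (a semi-simple face has exactly one concave wedge, hence is matched to exactly one source or sink; $f_o,f_o'$ similarly).

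The heart of the argument is then a planarity/topology step. First I would observe that a cycle $C$ in $G'$, being a cycle in the embedded graph $G'$, separates the sphere into an interior and an exterior; at the same time each of its edges is confined to the interior of a face of the \emph{original} embedded graph $G$. I would like to derive a contradiction from the interplay of these two facts together with the winding-number bookkeeping. Concretely: consider the faces of $G$ whose interiors are used by edges of $C$; since each added edge stays inside one face of $G$, and faces of $G$ have pairwise disjoint interiors, the only way a simple closed curve $C$ can be assembled from such arcs is if $C$ passes through shared boundary vertices and essentially "hugs" the union of a collection of faces. Because each face of $G$ contributes at most one added edge and that edge connects its two local minima (or two local maxima), one shows a collection of faces whose added edges form a cycle would force a corresponding cycle structure among the local-minimum vertex occurrences; tracking the $\gamma$-labels along $C$ gives $\wn(C)=\length(C)/c$ by Lemma~\ref{lemma:wn}, and since all vertices of $C$ lie in the single cluster $V_i$ every edge of $C$ contributes $0$ to $\length$, so $\wn(C)=0$, i.e.\ $C$ is null-homotopic in $\mathbb R^2\setminus\{\mathbf 0\}$. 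So $C$ bounds a disc $\Delta$ not containing the origin. Now I examine the faces of $G'$ inside $\Delta$: by~$(*)$ applied in $G'$, the sum of $\wn$ over the faces of $G'$ inside $C$ equals $\wn(C)=0$, which is consistent, so I instead extract the contradiction from the matching/concave-wedge structure. The cleanest route: each added edge was used to eliminate a sink or source, i.e.\ to destroy the unique concave wedge of some face; if the added edges formed a cycle $C$, then reading around $C$ one would encounter a source/sink and its matched face alternately, and the disc $\Delta$ bounded by $C$ would have to contain, on one side, the concave wedges that were "filled in", contradicting~(A)--(C) which allow at most one concave wedge per relevant face — in particular $\Delta$ would contain a face of $G'$ that before augmentation was a face of $G$ with two or more concave wedges, impossible.

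The main obstacle I anticipate is making the combinatorial-to-topological bridge rigorous: precisely showing that a cycle among the added edges is impossible given that each added edge is trapped in a distinct face of $G$ and connects that face's two local minima (or maxima). The key sub-lemma to nail down is that the "dual-ish" incidence structure — faces of $G$ on one side, their local minima on the other — when we add the matching edges, cannot close up into a cycle; equivalently, that contracting each face of $G$ to a point turns the added edges into a forest. I would prove this by a counting/Euler-type argument: the added edges, one per matched face, together with the faces they live in, form a graph on (a subset of) the local extrema whose number of edges equals the number of involved faces, and using planarity of $I$ together with the fact that $M$ is a perfect matching one shows no cycle can arise — a cycle would yield a closed region bounded by added edges containing strictly fewer local minima than edges of $C$, contradicting that each added edge in the boundary "uses up" one local minimum of a face interior to the region while faces have at most one spare (non-matched) local minimum each, i.e.\ at most one local minimum not serving as a sink/source. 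Once this forest property is established, $G'[V_i]$, whose edges all come from this forest (restricted to cluster $V_i$), is itself a forest, completing the proof. \qed
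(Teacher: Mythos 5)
Your proposal gets the easy parts right --- every edge of a cycle $C$ in $G'[V_i]$ is a newly added edge, each added edge lies in the interior of a distinct face of $G$, and $\wn(C)=0$ because all vertices of $C$ lie in one cluster --- but the contradiction itself is never actually derived, and the route you sketch for it does not work as stated. You appeal to {\bf (A)--(C)} (``at most one concave wedge per relevant face'') to rule out the cycle, but {\bf (A)--(C)} are properties of a clustered embedding of $G$, and at this point no clustered embedding is available: its existence is exactly what Lemmas~\ref{lemma:keyFact1} and~\ref{lemma:keyFact2} are needed to establish, so invoking concave wedges of an embedding here is circular. The ``key sub-lemma'' you then propose (that the added edges become a forest after contracting faces, via an Euler-type count of local minima versus edges) is explicitly left open, and its sketch is shaky: the vertices of $C$ need not be sources or sinks --- the second endpoint $u'$ of an added edge is only a local minimum/maximum of the particular face (and for $f_o,f_o'$ need not even be a local extremum), and may have both incoming and outgoing edges --- so ``reading around $C$ one encounters a source/sink and its matched face alternately'' is false in general, and the bookkeeping of ``spare non-matched local minima'' per face is not justified.

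The paper's proof extracts the contradiction from a topological ingredient your argument never uses: after the augmentation, $\overrightarrow{G'}$ has no sinks and no sources. Choosing $C$ with minimal interior area, the rotation at a vertex of $C$ yields a vertex of a different cluster in the interior of $C$; following directed edges from it (always possible, since there are no sinks or sources) produces an oriented cycle $C'$ whose interior lies inside $C$ and which, by the minimality of $C$, is not contained in a single cluster, hence $\wn(C')>0$ by Lemma~\ref{lemma:wn}. By $(*)$ such a $C'$ must contain in its interior the unique inner face of nonzero winding number (Lemma~\ref{lemma:keyFact0}), and then $(*)$ applied to $C$ gives $0=\wn(C)=\wn(C')\neq 0$, the contradiction. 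Your own remark that $(*)$ applied to $C$ alone ``is consistent'' is precisely the symptom that the matching structure by itself is insufficient; the sink/source-free directed structure inside $C$ is the missing idea, so as it stands the proposal has a genuine gap.
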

\begin{proof}
For the sake of contradiction suppose that a cycle $C$ is contained in $G'[V_{j'}]$.
Let us choose $C$ such that the area of its interior is minimized.
Since $G[V_{j'}]$ is an independent set all the edges of $C$ are newly added.
Thus, by looking at the rotation of an arbitrary vertex $v'$ of $C$ we see that $v'$ is incident to a vertex 
 $v$ from  $V_j$, $j\not=j'$,  in the interior of $C$. Indeed, no two edges of $C$ subdivide the same face of $G$.

Using the fact that $\overrightarrow{G'}$ does not contain any source or sink, we show that 
a vertex $w$ in the interior of $C$ belongs to an oriented cycle $C'$ (by chance also directed in $\overrightarrow{G'}$), whose interior is contained in the interior of $C$   
such that $\wn(C')>0$.
The cycle $C'$ is obtained by following a directed path in $\overrightarrow{G'}$ (from which it inherits its orientation)
 passing through  $v$.
Either both ends of the path meet each other, they both meet $C$, or the path meet itself in the interior. 
In the first two cases we can take $w:=v$ in the last case it can happen that the directed path gives rise to a cycle $C'$ not containing $v$. However, $C'$ is not induced by a single cluster by the choice of $C$, and thus, $\wn(C')>0$ by Lemma~\ref{lemma:wn} and $C'$ contains a vertex $w$ from $V_j$.
 Let $F'$ denote the set of faces
in the interior of $C$ and not in the interior of $C'$. 
In all cases it can be seen   by Lemma~\ref{lemma:wn} that $\wn(C')>0$.

Indeed, as we proved in  the proof of Lemma~\ref{lemma:wn} 
$\wn(C') = \wn_j^+(C') + \wn_j^-(C')$.
Since $C'$ follows a directed path and is not induced by a single cluster we have $\wn_j^+(W)>0$ and $\wn_j^-(W)=0$.
Hence, $\wn(C') = \wn_j^+(C') + \wn_j^-(C')>0$.

 By~(*) it follows that $C'$ contains the unique
inner face with a non-zero winding number in its interior.
Then Lemma~\ref{lemma:keyFact0} with~(*) yields the following  contradiction
$$0=\wn(C)=\wn(C')+\sum_{f\in F'} \wn(f)=\wn(C')\not=0$$
\qed\end{proof}

Let $E'\subseteq \bigcup_i{V_i \choose 2}\setminus E(G')$ such that each edge in $E'$ can be added to 
the embedding of $G'$ without creating a crossing or increasing the number of inner faces with a non-zero winding number.
We do not put any direction on the edges in $E'$.
Since every inner face $\not=f_o'$ in $G'$ is simple, and its outer face and the face $f_o'$ are not adjacent
to a source or sink, all the edges in $E'$ can be 
introduced simultaneously without creating a crossing. 
In particular, no edge of $E'$ subdivides $f_o'$ or the outer face.
Let $E''$ denote a maximal subset of $E'$
that does not introduce a cycle in $(G'\cup E'')[V_i]$ for every $i=0,1,\ldots, c-1$ (see Figure~\ref{fig:simpleFace}), where $G' \cup E'' = (V(G'), E(G') \cup E'')$.
By Lemma~\ref{lemma:keyFact1}, $E''$ is well-defined.

\begin{figure}
\centering
\includegraphics[scale=0.65]{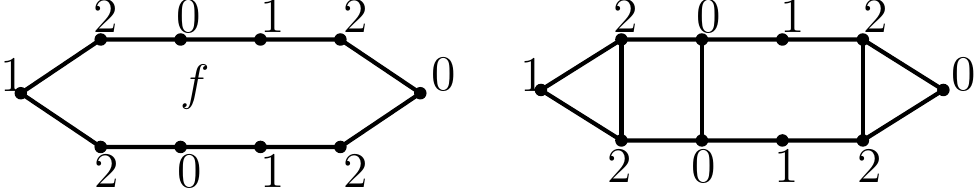}
\caption{A simple face $f$ of $G'$ (left). The face $f$ subdivided with edges of $E''$ (right). Labels at vertices are their $\gamma$ values (or indices of their clusters).}
\label{fig:simpleFace}
\end{figure}

\begin{lemma}
\label{lemma:keyFact2}
  $(G'\cup E'')[V_i]$  is a tree for $i=0,1,\ldots, c-1$.
  \end{lemma}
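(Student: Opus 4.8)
My goal is to show that $(G' \cup E'')[V_i]$ is connected and acyclic for every cluster $V_i$. Acyclicity is immediate: by Lemma \ref{lemma:keyFact1}, $G'[V_i]$ contains no cycle, and $E''$ was chosen precisely as a maximal subset of $E'$ not introducing any cycle into $(G' \cup E'')[V_i]$. So the entire content is \emph{connectivity}: I must argue that the maximal acyclic augmentation leaves no two vertices of $V_i$ in different components. The plan is the standard contrapositive: if $(G' \cup E'')[V_i]$ were disconnected, I would exhibit an edge of $E' \setminus E''$ joining two distinct components of $(G' \cup E'')[V_i]$, contradicting maximality of $E''$ — such an edge cannot create a cycle since its endpoints lie in different components.

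To produce that edge I would work in the fixed embedding of $G' \cup E''$. Suppose $A$ and $B$ are two distinct components of $(G' \cup E'')[V_i]$. Since $G'$ is connected and every inner face $\neq f_o'$ is simple while $f_o$ and $f_o'$ carry the only nonzero winding numbers and are not incident to sinks/sources (so $\gamma_f$ is well-defined on the relevant faces), I would trace a walk in $G'$ from a vertex of $A$ to a vertex of $B$ and locate a face $f$ — simple, hence with $\gamma_f$ defined — incident to a vertex $a \in A$ and a vertex $b \in B$ with the \emph{same} $\gamma_f$-value, equivalently $\gamma(a) = \gamma(b) = i$ and $a,b$ at the ``same level'' of the level structure of $f$. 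Concretely: because $G'$ has no sinks or sources, following directed paths from $A$ upward and from $B$ upward (or downward) inside the union of faces separating them, I can find a simple face on whose facial walk both a vertex of $A$ and a vertex of $B$ occur at level $i$; the candidate edge is the chord of $f$ joining them. This chord lies in $E'$: drawing it inside $f$ creates no crossing (it subdivides only the simple face $f$, and $f \neq f_o, f_o'$, so no new nonzero-winding-number face appears), and since $a$ and $b$ are in different components of $(G' \cup E'')[V_i]$ it creates no monochromatic cycle — so it could have been added, contradicting the maximality defining $E''$.

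The technical heart — and the step I expect to be the main obstacle — is the \emph{existence of a simple face incident to both components at a common level}. This is where I would lean hardest on the normalized structure: all inner faces except $f_o'$ are simple, $f_o$ and $f_o'$ are simple and free of sinks/sources, and crucially $(*)$ together with Lemma \ref{lemma:keyFact0} forces the whole configuration in the interior/exterior of any would-be separating cycle to be winding-number balanced. If $A$ and $B$ were not joinable through a common face at level $i$, the ``level-$i$ slice'' of the embedding would have to separate the plane in a way that manufactures an extra face of nonzero winding number or a sink/source, both excluded. I would also need to handle cut-vertices of $G'$, as in the earlier parts of the paper, by passing to the cycle corresponding to a facial walk drawn near the boundary of the face, so that the Jordan-curve arguments apply verbatim. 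Once this ``common-level face'' lemma is in hand, the maximality contradiction closes the proof.
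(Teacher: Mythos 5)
Your reduction of the lemma to connectivity, and the maximality framing (a missing edge of $E'$ between two components of $(G'\cup E'')[V_i]$ would contradict the choice of $E''$), is sound and matches the spirit of one of the two cases the paper distinguishes. But the step you yourself flag as the ``technical heart'' --- that two distinct components $A,B$ of $(G'\cup E'')[V_i]$ must be incident to a common face at a common $\gamma_f$-level --- is exactly the content of the lemma, and your proposal does not prove it. The obstruction is not hypothetical: the generic bad configuration is that $A$ and $B$ are separated by a cycle $C$ of $(G'\cup E'')[V\setminus V_i]$ (or, more generally, a vertex of $V_i$ lies in the interior of such a cycle), in which case no face is shared at all, and no appeal to ``tracing a walk from $A$ to $B$'' or to ``following directed paths upward'' produces the desired face. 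Your one-sentence dismissal of this case (``the level-$i$ slice would have to manufacture an extra face of nonzero winding number or a sink/source'') asserts the conclusion rather than deriving it.

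The paper's proof is devoted precisely to ruling out that configuration, and the argument is quantitative, not just an exclusion of sinks/sources: from a cycle $C$ in $(G'\cup E')[V\setminus V_i]$ with a vertex $w\in V_i$ in its interior, one follows a directed path of $\overrightarrow{G'}$ through $w$ (possible since $\overrightarrow{G'}$ has no sinks or sources, and it cannot cycle inside $V_i$ by Lemma~\ref{lemma:keyFact1}) to obtain an oriented cycle $C'$ whose interior lies inside $C$; because $C$ avoids $V_i$, the path's crossings of the ray bounding the cluster $V_i$ are all in the positive sense, so $\wn_i^+(C')>0$, $\wn_i^-(C')=0$, hence $\wn(C')>0$ by the computation in Lemma~\ref{lemma:wn}. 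On the other hand $C$ itself avoids $V_i$, so $\wn(C)=0$, and then $(*)$ together with Lemma~\ref{lemma:keyFact0} (only one inner face of $G'$ has nonzero winding number, and it must lie inside $C'$) gives $0=\wn(C)=\wn(C')\neq 0$, a contradiction. Your proposal contains no equivalent of this winding-number computation; moreover, even in the shared-face case you would still need to justify that the two components meet a face at equal $\gamma_f$-values (otherwise the chord would create a face of nonzero winding number and lie outside $E'$), which again requires the same machinery. As written, the proposal establishes acyclicity and the easy maximality step, but leaves the essential claim unproved.
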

\begin{proof}
Suppose for the sake of contradiction that $(G'\cup E'')[V_i]$ for some $i$ is not a tree, and thus, it is just a  forest with more than one connected component.
It follows that either (1) there exists a cycle in $(G'\cup E'')[V\setminus V_i ]$
containing a vertex $v$ of $V_i$ in its interior
or (2) a pair of vertices of $V_i$ in different
connected components of $(G'\cup E'')[V_i]$  are incident to the same face of $(G'\cup E'')$.  
The claim (1) or (2) implies that there exists a cycle
$C$ in $(G'\cup E')[V\setminus V_i]$ containing a vertex $w$ of $V_i$ in its interior.
Similarly as in the proof of Lemma~\ref{lemma:keyFact1}, by following
a directed path through $v$ we obtain 
an oriented cycle $C'$ (this time not necessarily directed) in $G$,  whose interior is contained in the interior of $C$ with $\wn(C')>0$ yielding a contradiction.

Indeed, as we proved in  the proof of Lemma~\ref{lemma:wn} 
$\wn(C') = \wn_i^+(C') + \wn_i^-(C')$.
Since $C'$ is not induced by a single cluster and follows in the interior of $C$ a directed path, and $C$ does not have any vertex in $V_i$ we have
 $\wn_i^+(W)>0$ and $\wn_i^-(W)=0$.
Hence, $\wn(C') = \wn_i^+(C') + \wn_i^-(C')>0$. 
\qed\end{proof}

By Lemma~\ref{lemma:keyFact2}, every $F_i$ is a tree. Taking a close neighborhood of each such $F_i$ as a disc representing
the cluster $V_i$ we obtain a desired clustered embedding of $(G'\cup E'')$. In the obtained embedding we just delete edges not belonging to $G$ and that concludes the proof of the correctness of our algorithm. \\

\subsection{Normalization}
 \label{sec:norm}

 In the present section we normalize the instance so that~(i)-(iii) are satisfied.
We argued the connectedness in Introduction, and hence, (i) is taken care of.
\iflong 

A \emph{contraction} of an  edge $e=uv$ in a topological graph is an operation that turns
$e$ into a vertex
by moving $v$ along $e$ towards $u$ while dragging all the other edges incident to $v$ along $e$.
By a contraction we can introduce multi-edges or loops at the vertices.
We will also  use the following operation which can be thought of as the inverse operation of the edge contraction
in a topological graph.
A \emph{vertex split} in a drawing of a graph $G$ is an operation that replaces a vertex $v$ by two vertices $v'$ and $v''$
drawn in a small neighborhood of $v$ joined by a short crossing free edge so that the neighbors of $v$ are partitioned into two parts
according to whether they are joined with $v'$ or $v''$ in the resulting drawing, the rotations at $v'$ and $v''$ are inherited from the
rotation at $v$, and the new edges are drawn in the small  neighborhood of the edges they correspond to in $G$.

Regarding~(ii), by a series of successive edge contractions we contract each connected component of $G[V_i]$'s to a vertex.
We delete any created loop.
If a loop at a vertex from $V_i$ contains a vertex from a different cluster $V_j$, $j\not=i$, in its interior we know that the 
instance is not c-planar, since for every $j$ all the vertices in $V_j$ must be contained in the outer face of $G[V\setminus V_j]$ in a positive instance. This all can be easily checked in  polynomial time.
Otherwise, a contraction preserves c-planarity of $G$, since
deleted empty loops can be introduced in a c-planar embedding of the reduced graph,
and contracted edges recovered via vertex splits.
From now on we assume that clusters of $G$ form independent sets. 
It remains to satisfy (iii).
\else
To achieve~(ii) is fairly standard by contracting components induced by clusters to vertices. \fi
Thus, it remains to satisfy (iii).

We want to sub-divide a non-simple face $f$  into 
a pair of faces one of which is
semi-simple by a monotone path $P'$ w.r.t. $\gamma$.
Let $uPv$ denote an oriented monotone  sub-walk of $f$ w.r.t. $\gamma$ joining a local minimum $u$ and maximum $v$ of $f$ minimizing  $|\length(P)|$.
Let $vQv'$ denote the oriented  monotone walk with $|\length(P)|=|\length(Q)|$ immediately following $P$ on the facial walk of $f$, and let $u'Q'u$ be such walk  immediately preceding $P$ on the facial walk of $f$. Note that $Q$ and $Q'$ exists due to the minimality of $P$ and that we have
$\length(Q)=\length(Q') = - \length(P)$.
Similarly as in~\cite{F14+} we subdivide $f$ into two faces $f'$ and $f''$
by a strictly monotone path $v'P'u'$  w.r.t. $\gamma$. Hence,  $\length(P) = \length(P')$.
We have $\length(Q)=\length(Q') = - \length(P)= -\length(P')$.
Thus, by Lemma~\ref{lemma:wn} if $f$ with $\wn(f)\not=0$ is semi-simple we obtain a simple face $f'$ with 
$\wn(f')\not=0$ and a semi-simple face $f''$ with $\wn(f'')=0$ as desired.
Indeed, $\wn(f'') = \length(P') +\length(Q') + \length(P)+\length(Q)=0$
and $c\cdot\wn(f) = \length(v'P''u') + \length(Q')+\length(P) + \length(Q) =\length(v'P''u') - \length(P')= c\cdot\wn(f')$. It remains to show the following lemma, since both $f'$ and $f''$ 
are incident to less local minima and maxima than $f$ if $f$ is not semi-simple.
Hence, after $O(|V|)$ facial subdivisions we obtain a desired instance, since $|E(I)|=O(|V|)$.

\begin{lemma}
\label{lemma:norm}
If the c-graph $G$ is c-planar then by subdividing $f$ of $G$ by $P'$ into
a pair of faces $f'$ and $f''$, where $f''$ is semi-simple we obtain a c-planar c-graph. Moreover, 
$\wn(f')=\wn(f)$ and $\wn(f'')=0$.
\end{lemma}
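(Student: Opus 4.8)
The plan is to argue that the subdivision of $f$ by the strictly monotone path $v'P'u'$ can be performed inside a clustered embedding of $G$ without destroying the clustered property, so that the resulting c-graph is again c-planar; the winding-number bookkeeping has essentially already been carried out in the paragraph preceding the lemma, so the real content is the topological/combinatorial claim that $P'$ can be drawn. First I would fix a clustered embedding $\mathcal D$ of $G$ (in the fan-drawing normal form of Section~\ref{sec:fan}, so that clusters are wedges bounded by rays from the origin), and consider the face $f$ together with the sub-walks $u'Q'u$, $uPv$, $vQv'$ on its facial boundary. Since $P$ is monotone with respect to $\gamma$ and of minimum $|\length(P)|$ among monotone sub-walks joining a local minimum to a local maximum of $f$, I would observe that $u'$ and $v'$ lie at the same ``winding level'' $\gamma_f(u')=\gamma_f(v')$ (this is exactly what $\length(Q)=\length(Q')=-\length(P)$ encodes, via Lemma~\ref{lemma:wn}), which is what makes a \emph{strictly} monotone path from $v'$ to $u'$ combinatorially consistent with the cluster labels: such a path visits each cluster wedge between the levels of $v'$ and $u'$ exactly once, crossing the intermediate rays once each, and can therefore be routed entirely inside $f$ (which is simply connected, being a face of an embedding) without meeting the origin and without crossing any edge of $G$.

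The key steps, in order, are: (1) set up $\mathcal D$ and the decomposition $f = Q' \cdot P \cdot Q \cdot R$ of the facial walk, where $R$ is the remaining part; (2) check that $u'$ and $v'$ are at the same $\gamma_f$-level, and that the portion of $\partial f$ from $v'$ to $u'$ going the ``short way'' (through $P$) has the two monotone wedges at $u$ and $v$ as its only local extremes, so that drawing a strictly monotone arc from $v'$ to $u'$ in the interior of $f$ cuts off a sub-face $f''$ whose boundary is $v'P'u'$ followed by $u'Q'uPvQv'$ — this $f''$ has exactly two local minima and two local maxima at the same $\gamma_f$-level, i.e. is semi-simple; (3) verify that the arc $P'$ can be realized as an \emph{edge-subdivided} path whose internal vertices are placed in the appropriate cluster wedges (one new vertex per wedge strictly between the levels of $v'$ and $u'$), each internal vertex forming an independent new cluster occurrence, so condition (ii) is preserved and each new edge crosses only one ray; (4) re-derive $\wn(f') = \wn(f)$ and $\wn(f'') = 0$ from the additivity of $\length$ as in the paragraph above the lemma, using $\length(P') = \length(P) = -\length(Q) = -\length(Q')$; and (5) conclude that the new embedding of $G$ together with the added path is still a clustered embedding, hence — after deleting the added path, or equivalently by~\cite[Theorem~1]{FCEb95} applied to the augmented connected instance — $G$ with the subdivided face is c-planar.

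The main obstacle I expect is step~(3): showing that the monotone arc $P'$ genuinely fits inside $f$ in \emph{every} clustered embedding, not just that its winding numbers are right. The subtlety is that $f$ may be incident to cut-vertices or may wrap around so that its interior, while simply connected as a face, has a complicated shape relative to the ray arrangement; one must argue that between consecutive rays the portion of $f$ available is a single disc-like region through which the arc can pass monotonically. I would handle this exactly as the paper handles the analogous points elsewhere (e.g. the treatment of {\bf (A)--(C)} for faces incident to cut-vertices): pass to the cycle $C$ corresponding to the facial walk of $f$, embedded in a thin neighborhood of $\partial f$, where the region is genuinely an open disc, draw $P'$ there, and then transfer back. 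A secondary point to be careful about is that introducing $P'$ must not create a \emph{second} inner face of non-zero winding number; this follows because $\wn(f'')=0$ and $\wn(f')=\wn(f)$, so the multiset of winding numbers over inner faces is unchanged except that $f$ is replaced by the pair $f',f''$, and Lemma~\ref{lem:2}'s obstruction is not triggered. Finally one should note the monotonicity of $P'$ is what guarantees $G[V_i]$ stays an independent set after the subdivision, since each internal vertex of $P'$ is a brand-new vertex in its own cluster level with exactly two neighbors on $P'$, both in adjacent clusters.
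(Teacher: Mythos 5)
There is a genuine gap at your step~(3), and it is precisely the step that constitutes the bulk of the paper's proof. You argue that, because $u'$ and $v'$ have the same $\gamma_f$-level and $f$ is simply connected, a \emph{strictly monotone} arc from $v'$ to $u'$ ``can therefore be routed entirely inside $f$'' in any clustered embedding. Simple connectivity only gives you \emph{some} arc inside $f$; it does not give a monotone one. Whether a strictly monotone path fits inside $f$ depends on the geometry of the face relative to the rays separating the clusters: if the wedge $\omega_u$ or $\omega_v$ at a local extreme of $f$ is concave, or if the relevant segment ${\bf p_up_{v'}}$ on a separating ray is crossed by other edges of $G$ (the face may contain ``spurs'', and other parts of $G$ may be drawn in the region you want to route through, cf.\ Figure~\ref{fig:deform4}), then no monotone arc from $v'$ to $u'$ exists in that particular embedding. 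This is why the paper does not simply draw $P'$: it \emph{deforms} the given clustered embedding --- isolating the part of $G$ inside a spur in a disc $D$ (via a covering map when the boundary of $D_0$ self-intersects), mapping it homeomorphically to a disc $D'$ alongside $Q$, re-routing $Q'$ past $u$ when $Q'$ crosses ${\bf p_{v'}p_u}$, cutting and reconnecting severed edges inside a cluster --- until the configuration becomes the easy one (convex wedge, uncrossed segment), and only then adds $P'$. Your proposed fix for the obstacle, namely passing to the cycle corresponding to the facial walk drawn in a thin neighborhood of $\partial f$, only cures the cut-vertex/self-touching-boundary issue (which is how the paper argues {\bf (A)--(C)}); it does nothing for monotonicity, since an arc confined to a thin neighborhood of $\partial f$ mimics the facial walk and therefore inherits its local extremes, so the transferred path is not strictly monotone and the identity $\length(P')=\length(P)$ on which your winding-number bookkeeping rests is lost.

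Your steps (1), (2), (4) and the final remark about not creating a second inner face of non-zero winding number are consistent with the computation the paper does before the lemma, and the observation that the internal vertices of $P'$ keep the clusters independent is fine. But without an argument that the embedding can be deformed (while remaining a clustered embedding of the same combinatorial instance) so that $P'$ becomes drawable --- i.e., without the case analysis on concave wedges, spurs, and edges crossing ${\bf p_up_{v'}}$ that occupies the appendix --- the c-planarity claim of Lemma~\ref{lemma:norm} is not established.
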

\begin{proof}
The second statement is proved above.
Hence, we deal just with the first one.
Let $e_{u}$ and $e_{u}'$ denote the first edge on $P$ and the last edge on $Q'$, respectively. Let $e_{v}$ and $e_v'$
denote the last edge on $P$ and the first edge on $Q$, respectively.
Let $e_{v'}$ and $e_{u'}$ denote the last edge on $Q$ and the first edge on $Q'$.
Let  ${\bf p_{u}=p_{e_{u}}}, \iflong {\bf p_{u}'=p_{e_{u}'}}, {\bf p_{u'}=p_{e_{u'}}}, {\bf p_{v}}={\bf p_{e_v}},{\bf p_{v}'}={\bf p_{e_{v}'}}\fi$  and ${\bf p_{v'}=p_{e_{v'}}}$
 denote the intersection of the edges $e_u,\iflong e_{u}',e_{u'},e_v,e_v'\fi$ and $e_{v'}$, respectively,  with a ray separating a pair of clusters.
Let $\omega_u$ and $\omega_v$ denote the wedge between $e_u,e_u'$ and $e_v,e_v'$, respectively, in $f$.

We presently show that subdividing $f$ with $P'$  preserves c-planarity, since a clustered
embedding without $P'$ can be deformed so that $P'$ can be added to a clustered planar embedding without creating a crossing, while keeping the embedding clustered. 
This is not hard to see if, let's say $\omega_v$, is convex and the line segment ${\bf p_up_{v'}}$ is not crossed by an edge. Since $\omega_v$ is
convex, the relative interior of ${\bf p_up_{v'}}$ is contained in the interior of $f$. Note that $u'Q'PQv'$ is a sub-walk of $f$ since $f$ is not simple. We draw a curve $C$ joining $u'$ with $v'$ following the walk $u'Q'PQv'$ in its small neighborhood in the interior $f$; we cut $C$ at its (two) intersection points with ${\bf p_up_{v'}}$ and reconnected the severed ends on both sides by a curve following ${\bf p_up_{v'}}$ in its small neighborhood thereby obtaining a closed curve, and a curve $C'$ joining  $v'$ and $u'$. Finally, $C'$ can be subdivided by vertices thereby 
yielding a desired embedding of $G\cup P'$.
Otherwise, if $\omega_v$ is concave
or ${\bf p_up_{v'}}$ is crossed by an edge of $G$ we need to deform the clustered
embedding of $G$ so that this is not longer the case.

 \begin{figure}[h]
  \centering
\centering\textbf{•}
{
\includegraphics[scale=0.6]{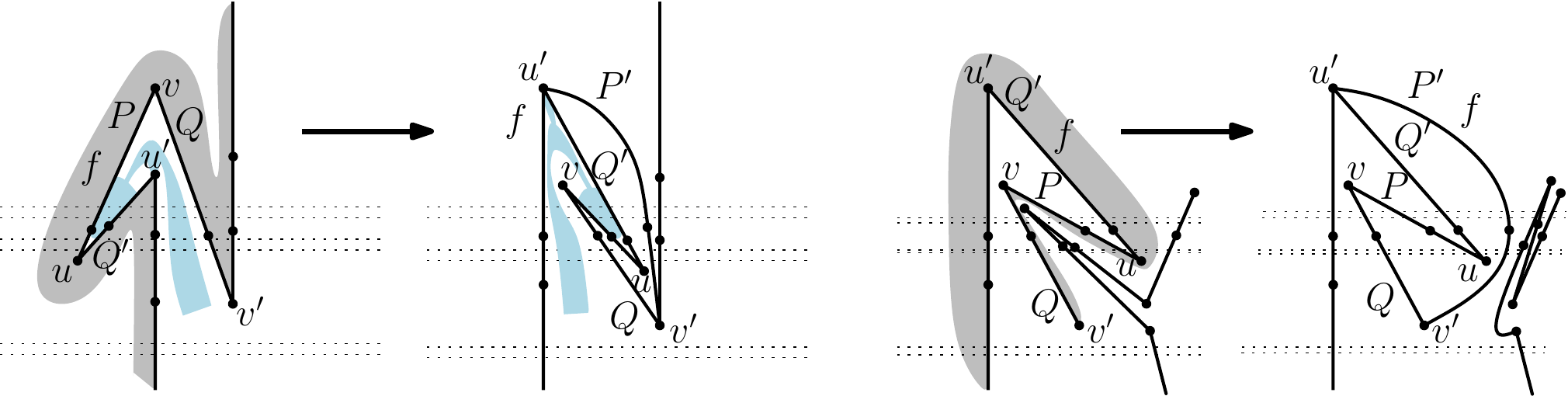}\textbf{•}
    	}

\caption{A pair of deformations of the clustered embedding of $G$ so that $f$ can be subdivided by $P'$. For the sake of clarity clusters are drawn as horiznotal strips rather than wedges.}
 \label{fig:stork}
\end{figure}

By a \emph{spur} with the \emph{tip} $u$ we understand a closed curve  obtained  as  a concatenation of a line segment contained in a ray separating clusters
and a curve contained in the boundary of $f$ passing through exactly one extreme $u$
of $f$ such that the curve is longest possible. The \emph{length} is the spur
is one plus the number of its crossings with rays separating clusters divided by two.
 If $\omega_u$ is concave, the vertex $u$ is a tip of a spur whose length is the distance of $u$ to a closest other extreme along the face. Note that both $P$ and $Q'$ must be paths in this case.
The rough idea in the omitted part of the proof is that shortest spurs have room around them to be deformed while maintaining the embedding
clustered such that $P'$ can be added.
Spurs  are deformed  as illustrated in Fig.~\ref{fig:stork}. (see Appendix for the rest of the proof)
\iflong

 \begin{figure}[h]
  \centering
\centering
{
\includegraphics[scale=0.65]{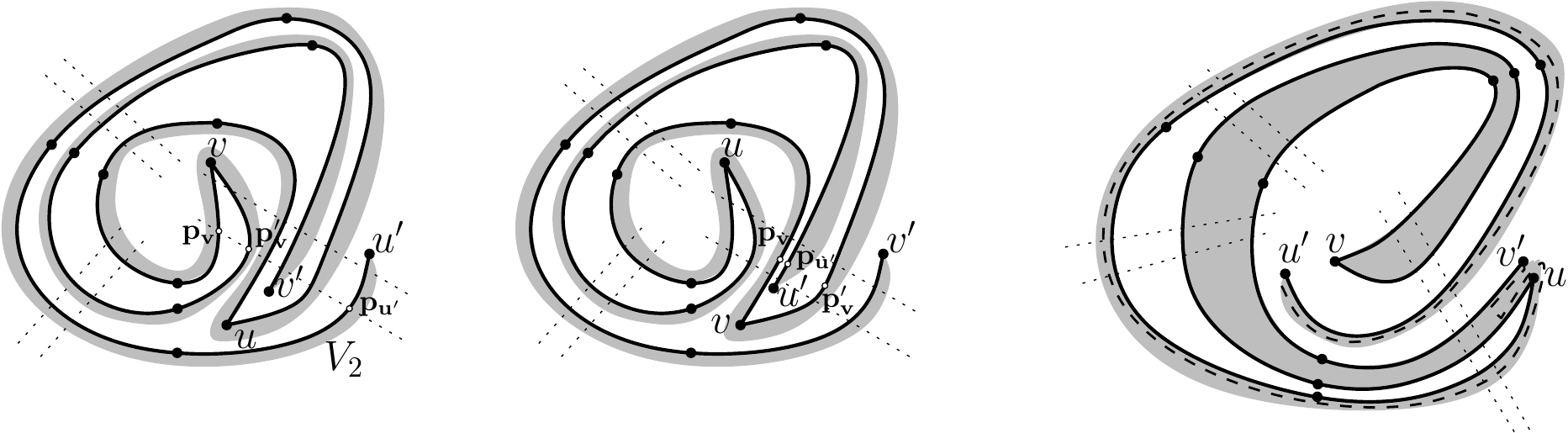}
    	}

\caption{Deformation in the case when both $u$ and $v$ have concave wedges incident to $f$ that is indicated by grey.
The dashed curve represents the path $P'$ subdividing $f$.
On the left,  point ${\bf p_v'}\in{\bf p_{v}p_{u'}}$. In the middle, 
   point ${\bf p_v'}\not\in{\bf p_{v}p_{u'}}$. On the right, the corresponding
 deformation.}
\label{fig:deform1}
\end{figure}

 First, we suppose that $\omega_u$ is concave.
 W.l.o.g. we assume that ${\bf p_{v}'}\not\in{\bf p_{v}p_{u'}}$. This holds when $\omega_v$ is convex,
 Figure~\ref{fig:deform2}. Otherwise,
we exchange the roles of $u$ and $v$, see Figure~\ref{fig:deform1}. Combinatorially, there are two
cases depending on whether $v$ is concave, but we treat them
simultaneously.
We isolate a part of the embedding of $G$ inside a spur represented
by a topological disc $D$. In order to get a desired deformed clustered embedding of $G$ we define a homeomorphism from $D$ that we use to redraw the corresponding part of $G$  thereby disconnecting some edges
that are reconnected in the end.
 Let $D_0$ denote the topological disc bounded by the closed curve
obtained by concatenating the line segment ${\bf p_{v}}{\bf p_{u'}}$ with the parts
of $P$ and $Q'$ connecting endpoints of ${\bf p_{v}}{\bf p_{u'}}$  with $u$. 
We assume that $v'\not\in D_0$ which holds automatically when $\omega_v$ is concave due to ${\bf p_{v}'}\not\in{\bf p_{v}p_{u'}}$.

 \begin{figure}[h]
  \centering
\centering
\subfloat[]{
\label{fig:deform3}
\includegraphics[scale=0.65]{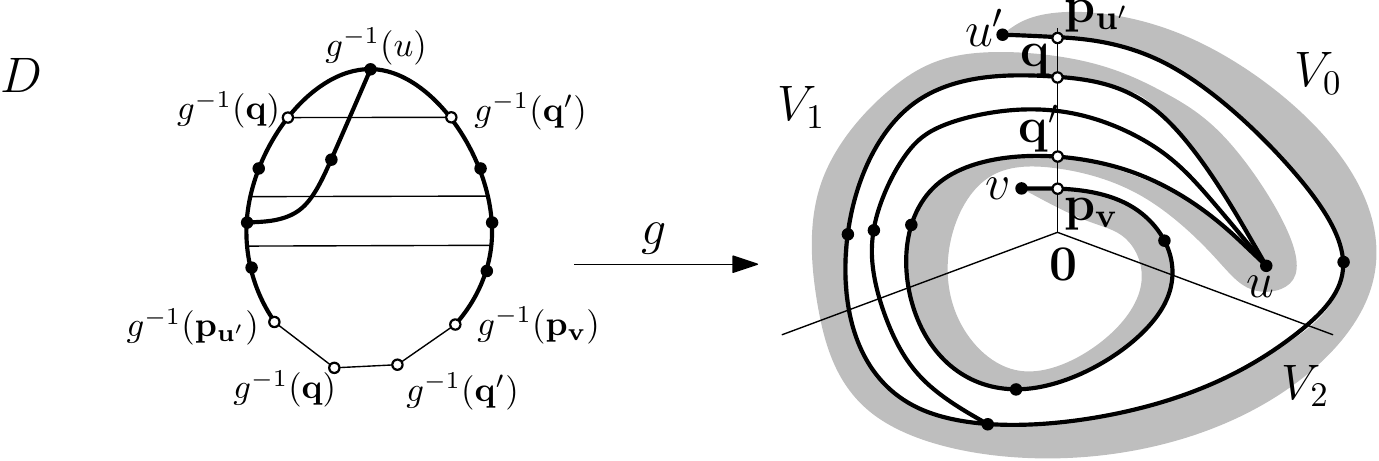}
    	}
    	\hspace{10px}
\subfloat[]{
\label{fig:deform5}
\includegraphics[scale=0.65]{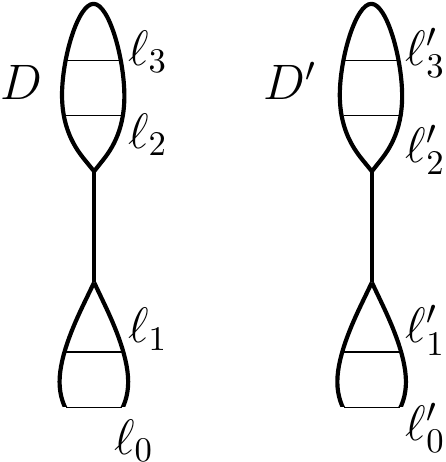}	
    	}
\caption{(a) A ``covering map'' $g$ from a disc onto a spur. (b) Identifying parts of the boundary of the discs $D'$ 
according to $D$.}

\end{figure}

  If $v$ does not have a concave wedge incident to $f$ it could happen that the boundary of $D_0$ crosses itself. As we will see later due to this reason we cannot simply put $D:=D_0$.
We use a ``covering map'' $g$, see Figure~\ref{fig:deform3}.
In the light of Jordan-Sch\"onflies theorem,
  let a topological disc $D$ be a pre-image of a continuous  map  $g:D \rightarrow D_0$ such that the map $g$ is injective when restricted (in the target) to the embedding of $G$;
$g$ maps the boundary of $D$ to the concatenation of ${\bf p_{v}}{\bf p_{u'}}$
with the parts of $P$ and $Q'$ connecting endpoints of ${\bf p_{v}}{\bf p_{u'}}$  with $u$;
and the pre-image (of  parts) of the rays separating clusters
 consists of a union of a connected part of the boundary of $D$ 
 (contained in the pre-image of ${\bf p_{v}}{\bf p_{u'}}$) and
a set of pairwise disjoint diagonals of $D$.
Treating $G$ as a topological space let $G|_D:=g^{-1}(G)$.  

 If $G$ contains cut-vertices $D$ can have pairs of boundary points identified, see Figure~\ref{fig:deform5}.
Let $\ell_0'$ denote  a line segment contained inside the intersection of the interior of $f$ with a ray separating clusters containing ${\bf p_v'}$, whose end vertex is very close to ${\bf p_v'}$, and $\ell_0'\subset{\bf p_v'p_v}$ if and only if $\omega_v$ is convex. Let $D'$ denote a disc bounded by $\ell_0'$ and a curve 
joining the endpoints of $\ell_0'$ following $Q$ towards $v'$ and back in its small neighborhood in the interior of $f$.
Let $\ell_0$ be the connected component on the boundary of $D$ in $g^{-1}({\bf p_vp_{u'}})$.
Let $\ell_i$ for $0<i<|\length(P)|$, denote a connected component of a pre-image by $g$ of (a part of) a ray separating clusters. The segments 
$\ell_i$'s are indexed by the order of appearance of their endpoints along the boundary of $D$.
A line segment $\ell_i$ is just a point if it joins identified boundary points in $D$.
Similarly, let $\ell_i'$ for $0<i<|\length(P)|$, denote a connected component of the 
intersection of $D'$ with rays separating clusters.
We contract $\ell_i'$ to a point iff $\ell_i$ is a single point
and we  contract  interior parts of $D'$ in the correspondence with $D$ and $\ell_i$'s as illustrated in Figure~\ref{fig:deform5}. Since $v'\not\in D_0$, we have $(g(D)=D_0)\cap D' = \emptyset$.

 \begin{figure}[h]
  \centering
\centering
\subfloat[]{
\includegraphics[scale=0.65]{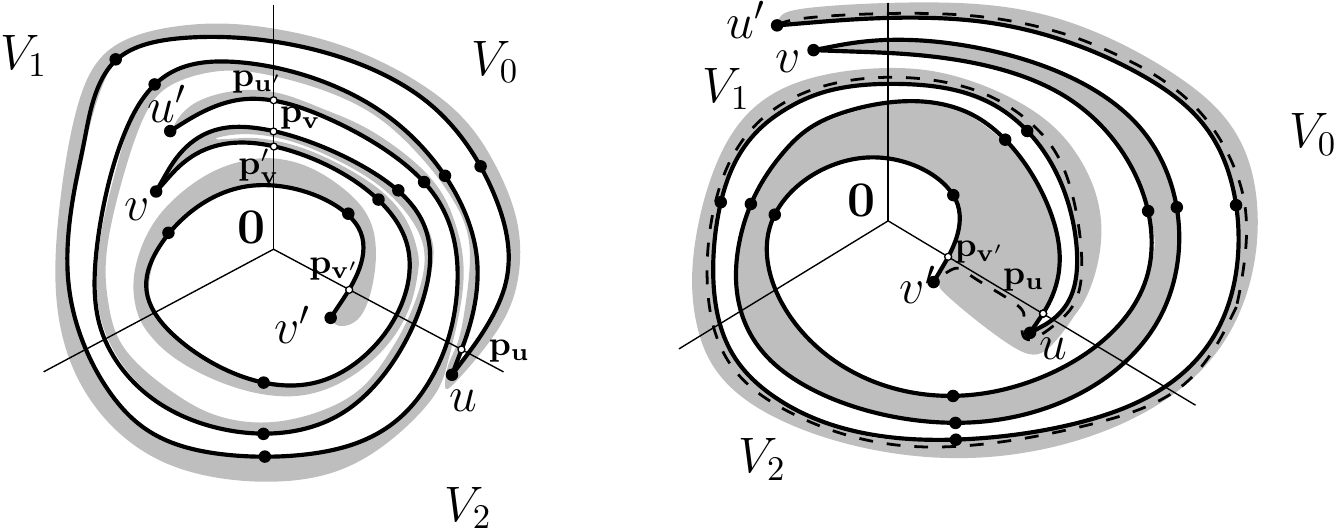}
\label{fig:deform2}    	}
    	\hspace{3px}
\subfloat[]{
\includegraphics[scale=0.65]{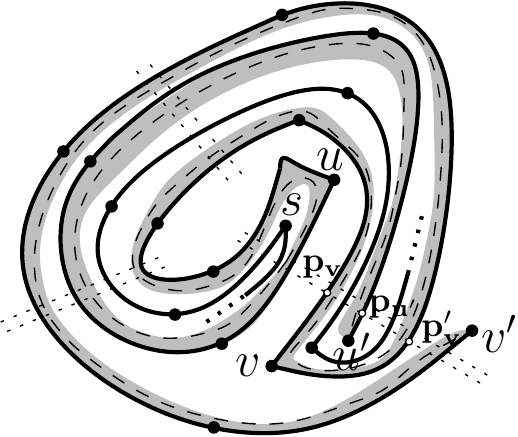}
\label{fig:deform4}    	}

\caption{(a) Deformation in the case when only $u$ has a concave wedge incident to $f$ that is indicated by grey.
The dashed curve represents the desired path subdividing $f$.
On the left, the line segment ${\bf p_{v'}p_{u}}$ crosses  edges of $G$. 
On the right, the corresponding deformation. (b) The wedge at $u$
in $f$, that is indicated by grey, is convex.  A sink $s$  in the interior of a spur having the vertex $u$
as the tip.}
\end{figure}

\begin{figure}
 \centering
\includegraphics[scale=0.7]{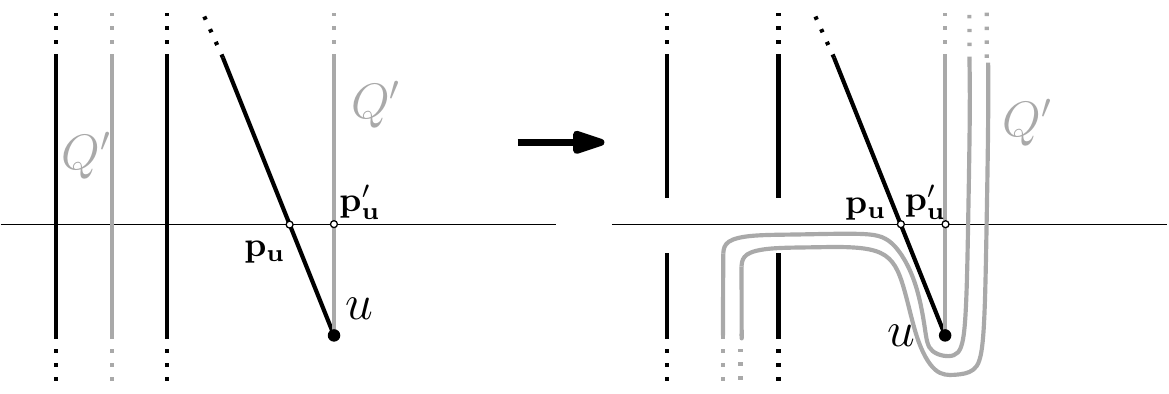}
\caption{Re-routing $Q'$ past $u$ along the boundary of $f$.}
\label{fig:deform8} 
\end{figure}

We map by a homeomorphism $h$ the disc $D$ to $D'$ so that $\ell_i$
is mapped to $\ell_i'$ for all $0\leq i<|\length(P)|$
and so that the endpoint ${\bf x}$ of $\ell_0$ for which $g({\bf x})$ is closer to $\ell_0'$ is mapped to the point of $\ell_0'$ closest to (farthest from) 
$g({\bf x})$ if $\omega_v$ is concave (convex), see Figure~\ref{fig:stork} for an illustration.
We alter the embedding of $G$ by deleting $g(G|_D)$ and replacing it
by $h(G|_D)$.
Finally, we  reconnect the severed end pieces of edges intersecting ${\bf p_vp_{u'}}$ by curves inside the cluster containing $v$ without
creating any edge crossing.

If $v'\in D_0$, we redraw the portion of $G$ contained in the disc $D$ (we override the previous $D$) bounded by the line segment ${\bf p_{v'}p_u}$
and the parts of $P$ and $Q$ joining end points of ${\bf p_{v'}p_u}$  with $v$, see Figure~\ref{fig:deform6}.
Note that the boundary of the disc $D$ is non-self intersecting.
If $Q'$ does not cross ${\bf p_{v'}p_u}$ we make ${\bf p_{v'}p_u}$ crossing free by mapping
the part of $G$ in $D$ to a long skinny disc $D'$  (we override the previous $D'$) in the vicinity of $Q'$ and reconnecting the severed edges similarly 
as above.

 \begin{figure}
  \centering
\centering
\subfloat[]{
\includegraphics[scale=1]{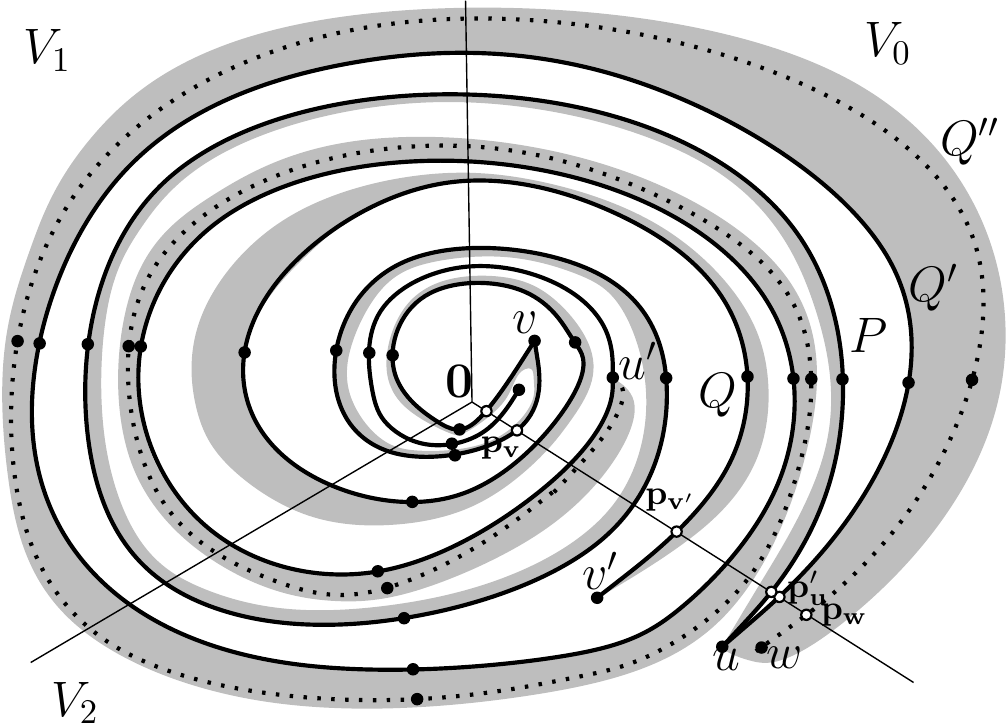}
\label{fig:deform6}  }
    	\hspace{1px}
\subfloat[]{
\includegraphics[scale=1]{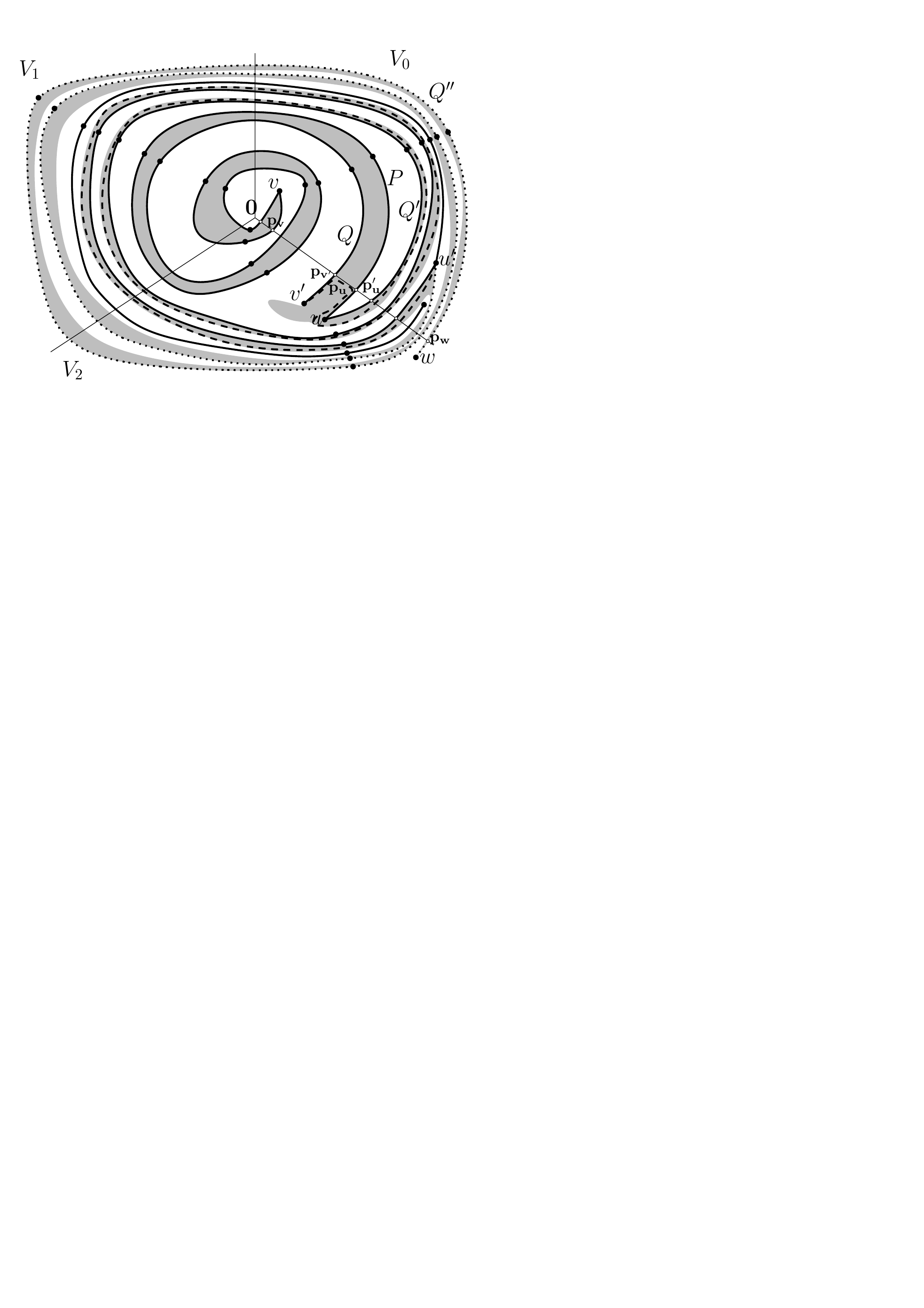}
\label{fig:deform7}}
\caption{The face $f$ is indicated by grey. (a) The vertex  $v'\in D_0$ corresponding to the spur with the tip $u$. (b) Corresponding  deformation involving $Q'$ and the dashed path $P'$ subdividing $f$.}
\end{figure}

If $Q'$ crosses ${\bf p_{v'}p_u}$ we cannot apply the previous argument since $D \cap D' \not=\emptyset$,
and thus, we need a different approach.
We temporarily add to $G$ a path $u'Q''w$ starting at $u'$, following closely $Q'$ in the interior of $f$
and ending in a vertex $w$. Let ${\bf p_w}$ denote the crossing point of the last edge on $Q''$
separating clusters with a ray separating clusters. 
We cut edges at ${\bf p_{v'}p_u}$ by removing a small $\epsilon>0$ neighborhood of their crossing points.
We reroute the paths $Q'$  (see Figure~\ref{fig:deform8}) and $Q''$ without crossing an edge of $G$ from their severed ends outside of $D$ past $u$ and along $Q'$ in the interior of $f$ so that we closely follow the boundary of $f$.
We  easily avoid creating crossings since we cut edges at ${\bf p_{v'}p_u}$.
Let $D''$  be the disc bounded by rerouted parts of $Q'$ and $Q''$ from their first intersections on the way from $u$ and $w$, respectively, with ${\bf p_wp_{u}'}$ to their common vertex $u'$; and by $\ell_0''\subset {\bf p_wp_{u}'}$ connecting the ends of those parts.
Let $\ell_i''$, $0\leq i < |\length(P)|-c$, denote the line segments in the intersection of $D''$ with rays separating clusters listed in the order of appearance along the boundary of $D''$.
We deform the embedding of $G$ by a mapping $g$ from $D\setminus (P \cup Q)$, see Figure~\ref{fig:deform7} for an illustration, such that $g$ maps the relative interior of $\ell_0={\bf p_{v'}p_u}$ to the relative interior of  ${\bf p_wp_{u}'}$, the parts of $Q'$ and $Q''$ in $D$ to their rerouted counterparts, and for the open line segments $\ell_i$, $1\leq i <|\length(P)|-c$, in the intersection of $D$ with the rays separating clusters we have $g(\ell_{i+c})\subset \ell_i'' \subset g(\ell_i)$.
The mapping $g$ is then extended to the whole $D\setminus (P \cup Q)$ such that (i) $g$ is a homeomorphism when restricted to the interior
of the slab between $\ell_i$ and $\ell_{i+1}$, for all $0\le i<|\length(P)|$, where $\ell_{|\length(P)|}$ is the 
final piece of the boundary of $D''$, and (ii) no edge crossings are introduced, i.e., $g(D|_{G})$ is injective
and $g(D|_{G}) \cap G =\emptyset$. It is not hard to see that $g$ exists. Finally, we reconnect the severed end pieces of edges inside the cluster
containing $u$ and remove $Q''$.
The previous deformation is perhaps easier seen as follows. The clustered embedding of $G|_{D}$ can be made  arbitrarily skinny. Thus, for the purpose of  deformation, we can picture that $G|_{D}$ consists just of the part of $Q'$ and $Q''$ in $D$ and a strictly monotone 
path starting at  $u'$ as in Figure\ref{fig:deform6}. We map $G|_{D}$ as indicated in Figure~\ref{fig:deform7}.

Second, if both $\omega_u$ and $\omega_v$ are convex we can subdivide $f$ by $P'$ unless ${\bf p_{v}p_{u'}}$ is intersected by edge(s) of $G$ (we still assume that  ${\bf p_v'}\not\in{\bf p_{v}p_{u'}}$).
   However, if this is the case $D_0$ (defined above) contains a sink or source $s$ in its interior, see Figure~\ref{fig:deform4}. Note that we can assume that $s$ is also a tip of a shortest spur of $f$, and hence, the previous case applies.\fi
\qed\end{proof}
%



{\bf Acknowledgment} I would like to thank Jan Kyn\v{c}l and D\"om\"ot\"or P\'alv\"olgyi for many comments and suggestions that helped to improve the presentation of the result.

\bibliographystyle{plain}

\bibliography{bib}

\newpage

\section*{Appendix}

\subsection*{Fan drawings}
\label{sec:fan}
We show that the clusters can be drawn as regions, each bounded by a pair of rays emanating from the origin.
Suppose that $G=(V_0\uplus \ldots \uplus V_{c-1},E) $ is given by a clustered embedding
living in the $xy$ plane of $\mathbb{R}^3$.
We assume that boundries of discs representing clusters do not touch.
Consider a stereographic projection from the north pole of a two-dimensional sphere $S$ 
sitting at the origin of $\mathbb{R}^3$.
Let $D$ be a stereographical pre-image of the embedding of $G$ on $S$.
Let $S'$ denote the union of $G$ (as a topological space) with the boundaries of the clusters in $D$.
Let $R_n$ and $R_s$ be a connected component of the complement of $S'$ in $S$, respectively, containing the north pole and south pole.
If necessary, we apply an isotopy  to $D$ (a continuous deformation keeping $D$ to be a clustered embedding all the time)  so that in the resulting embedding $D$ of $G$ on $S$ every boundary of a cluster intersects (in fact touches) the closure of $R_n$ and  the closure of $R_s$. 

We show that a desired isotopy exists. We contract every cluster to a point thereby
treating clusters as vertices in an embedding $D'$ of a cycle $C$ of length $c$ having multi-edges. 
Formally, this can be viewed as a quotient $S/\sim$, where $x\sim y$ iff
$x$ and $y$ belong to the same cluster.
In $D'$ there must be a pair of faces $f$ and $f'$ whose facial walk is $C$ since any cycle in the corresponding multi-graph is obtained as a symmetric difference of facial walks. Apply an isotopy to $D'$ such that $f$ contains
the north pole in its interior and $f'$ contains the south pole in its interior. Finally, we decontract clusters in the end. The above procedure can be easily turned into an isotopy of $D$. 

By projecting the resulting spherical embedding back to the plan we can also assume that we have a clustered embedding of $G$
such that clusters are represented by small discs of diameter $\epsilon>0$ each drawn in a close vicinity
of a different vertex of a regular convex $c$-gon with the center at the origin, and the edges
between clusters $V_i$ and $V_{i+1 \mod c}$, for every $i$, are closely following the edge
of the $c$-gon between the corresponding pair of vertices. 
The desired rays bounding clusters are those from the origin orthogonal to the sides of the $c$-gon. \\

\section*{Normalization}

\begin{proof}[the omitted part of the proof from Section~\ref{sec:norm}]
 \begin{figure}[h]
  \centering
\centering
{
\includegraphics[scale=0.65]{deform1}
    	}

\caption{Deformation in the case when both $u$ and $v$ have concave wedges incident to $f$ that is indicated by grey.
The dashed curve represents the path $P'$ subdividing $f$.
On the left,  point ${\bf p_v'}\in{\bf p_{v}p_{u'}}$. In the middle, 
   point ${\bf p_v'}\not\in{\bf p_{v}p_{u'}}$. On the right, the corresponding
 deformation.}
\label{fig:deform1}
\end{figure}

 First, we suppose that $\omega_u$ is concave.
 W.l.o.g. we assume that ${\bf p_{v}'}\not\in{\bf p_{v}p_{u'}}$. This holds when $\omega_v$ is convex,
 Figure~\ref{fig:deform2}. Otherwise,
we exchange the roles of $u$ and $v$, see Figure~\ref{fig:deform1}. Combinatorially, there are two
cases depending on whether $v$ is concave, but we treat them
simultaneously.
We isolate a part of the embedding of $G$ inside a spur represented
by a topological disc $D$. In order to get a desired deformed clustered embedding of $G$ we define a homeomorphism from $D$ that we use to redraw the corresponding part of $G$  thereby disconnecting some edges
that are reconnected in the end.
 Let $D_0$ denote the topological disc bounded by the closed curve
obtained by concatenating the line segment ${\bf p_{v}}{\bf p_{u'}}$ with the parts
of $P$ and $Q'$ connecting endpoints of ${\bf p_{v}}{\bf p_{u'}}$  with $u$. 
We assume that $v'\not\in D_0$ which holds automatically when $\omega_v$ is concave due to ${\bf p_{v}'}\not\in{\bf p_{v}p_{u'}}$.

 \begin{figure}[h]
  \centering
\centering
\subfloat[]{
\label{fig:deform3}
\includegraphics[scale=0.65]{deform3}
    	}
    	\hspace{10px}
\subfloat[]{
\label{fig:deform5}
\includegraphics[scale=0.65]{deform5}	
    	}
\caption{(a) A ``covering map'' $g$ from a disc onto a spur. (b) Identifying parts of the boundary of the discs $D'$ 
according to $D$.}

\end{figure}

  If $v$ does not have a concave wedge incident to $f$ it could happen that the boundary of $D_0$ crosses itself. As we will see later due to this reason we cannot simply put $D:=D_0$.
We use a ``covering map'' $g$, see Figure~\ref{fig:deform3}.
In the light of Jordan-Sch\"onflies theorem,
  let a topological disc $D$ be a pre-image of a continuous  map  $g:D \rightarrow D_0$ such that the map $g$ is injective when restricted (in the target) to the embedding of $G$;
$g$ maps the boundary of $D$ to the concatenation of ${\bf p_{v}}{\bf p_{u'}}$
with the parts of $P$ and $Q'$ connecting endpoints of ${\bf p_{v}}{\bf p_{u'}}$  with $u$;
and the pre-image (of  parts) of the rays separating clusters
 consists of a union of a connected part of the boundary of $D$ 
 (contained in the pre-image of ${\bf p_{v}}{\bf p_{u'}}$) and
a set of pairwise disjoint diagonals of $D$.
Treating $G$ as a topological space let $G|_D:=g^{-1}(G)$.  

 If $G$ contains cut-vertices $D$ can have pairs of boundary points identified, see Figure~\ref{fig:deform5}.
Let $\ell_0'$ denote  a line segment contained inside the intersection of the interior of $f$ with a ray separating clusters containing ${\bf p_v'}$, whose end vertex is very close to ${\bf p_v'}$, and $\ell_0'\subset{\bf p_v'p_v}$ if and only if $\omega_v$ is convex. Let $D'$ denote a disc bounded by $\ell_0'$ and a curve 
joining the endpoints of $\ell_0'$ following $Q$ towards $v'$ and back in its small neighborhood in the interior of $f$.
Let $\ell_0$ be the connected component on the boundary of $D$ in $g^{-1}({\bf p_vp_{u'}})$.
Let $\ell_i$ for $0<i<|\length(P)|$, denote a connected component of a pre-image by $g$ of (a part of) a ray separating clusters. The segments 
$\ell_i$'s are indexed by the order of appearance of their endpoints along the boundary of $D$.
A line segment $\ell_i$ is just a point if it joins identified boundary points in $D$.
Similarly, let $\ell_i'$ for $0<i<|\length(P)|$, denote a connected component of the 
intersection of $D'$ with rays separating clusters.
We contract $\ell_i'$ to a point iff $\ell_i$ is a single point
and we  contract  interior parts of $D'$ in the correspondence with $D$ and $\ell_i$'s as illustrated in Figure~\ref{fig:deform5}. Since $v'\not\in D_0$, we have $(g(D)=D_0)\cap D' = \emptyset$.

 \begin{figure}[h]
  \centering
\centering
\subfloat[]{
\includegraphics[scale=0.65]{deform2}
\label{fig:deform2}    	}
    	\hspace{3px}
\subfloat[]{
\includegraphics[scale=0.65]{deform4}
\label{fig:deform4}    	}

\caption{(a) Deformation in the case when only $u$ has a concave wedge incident to $f$ that is indicated by grey.
The dashed curve represents the desired path subdividing $f$.
On the left, the line segment ${\bf p_{v'}p_{u}}$ crosses  edges of $G$. 
On the right, the corresponding deformation. (b) The wedge at $u$
in $f$, that is indicated by grey, is convex.  A sink $s$  in the interior of a spur having the vertex $u$
as the tip.}
\end{figure}

\begin{figure}
 \centering
\includegraphics[scale=0.7]{surgery}
\caption{Re-routing $Q'$ past $u$ along the boundary of $f$.}
\label{fig:deform8} 
\end{figure}

We map by a homeomorphism $h$ the disc $D$ to $D'$ so that $\ell_i$
is mapped to $\ell_i'$ for all $0\leq i<|\length(P)|$
and so that the endpoint ${\bf x}$ of $\ell_0$ for which $g({\bf x})$ is closer to $\ell_0'$ is mapped to the point of $\ell_0'$ closest to (farthest from) 
$g({\bf x})$ if $\omega_v$ is concave (convex), see Figure~\ref{fig:stork} for an illustration.
We alter the embedding of $G$ by deleting $g(G|_D)$ and replacing it
by $h(G|_D)$.
Finally, we  reconnect the severed end pieces of edges intersecting ${\bf p_vp_{u'}}$ by curves inside the cluster containing $v$ without
creating any edge crossing.

If $v'\in D_0$, we redraw the portion of $G$ contained in the disc $D$ (we override the previous $D$) bounded by the line segment ${\bf p_{v'}p_u}$
and the parts of $P$ and $Q$ joining end points of ${\bf p_{v'}p_u}$  with $v$, see Figure~\ref{fig:deform6}.
Note that the boundary of the disc $D$ is non-self intersecting.
If $Q'$ does not cross ${\bf p_{v'}p_u}$ we make ${\bf p_{v'}p_u}$ crossing free by mapping
the part of $G$ in $D$ to a long skinny disc $D'$  (we override the previous $D'$) in the vicinity of $Q'$ and reconnecting the severed edges similarly 
as above.

 \begin{figure}
  \centering
\centering
\subfloat[]{
\includegraphics[scale=1]{3rdcaseT}
\label{fig:deform6}  }
    	\hspace{1px}
\subfloat[]{
\includegraphics[scale=1]{3rdcase2T}
\label{fig:deform7}}
\caption{The face $f$ is indicated by grey. (a) The vertex  $v'\in D_0$ corresponding to the spur with the tip $u$. (b) Corresponding  deformation involving $Q'$ and the dashed path $P'$ subdividing $f$.}
\end{figure}

If $Q'$ crosses ${\bf p_{v'}p_u}$ we cannot apply the previous argument since $D \cap D' \not=\emptyset$,
and thus, we need a different approach.
We temporarily add to $G$ a path $u'Q''w$ starting at $u'$, following closely $Q'$ in the interior of $f$
and ending in a vertex $w$. Let ${\bf p_w}$ denote the crossing point of the last edge on $Q''$
separating clusters with a ray separating clusters. 
We cut edges at ${\bf p_{v'}p_u}$ by removing a small $\epsilon>0$ neighborhood of their crossing points.
We reroute the paths $Q'$  (see Figure~\ref{fig:deform8}) and $Q''$ without crossing an edge of $G$ from their severed ends outside of $D$ past $u$ and along $Q'$ in the interior of $f$ so that we closely follow the boundary of $f$.
We  easily avoid creating crossings since we cut edges at ${\bf p_{v'}p_u}$.
Let $D''$  be the disc bounded by rerouted parts of $Q'$ and $Q''$ from their first intersections on the way from $u$ and $w$, respectively, with ${\bf p_wp_{u}'}$ to their common vertex $u'$; and by $\ell_0''\subset {\bf p_wp_{u}'}$ connecting the ends of those parts.
Let $\ell_i''$, $0\leq i < |\length(P)|-c$, denote the line segments in the intersection of $D''$ with rays separating clusters listed in the order of appearance along the boundary of $D''$.
We deform the embedding of $G$ by a mapping $g$ from $D\setminus (P \cup Q)$, see Figure~\ref{fig:deform7} for an illustration, such that $g$ maps the relative interior of $\ell_0={\bf p_{v'}p_u}$ to the relative interior of  ${\bf p_wp_{u}'}$, the parts of $Q'$ and $Q''$ in $D$ to their rerouted counterparts, and for the open line segments $\ell_i$, $1\leq i <|\length(P)|-c$, in the intersection of $D$ with the rays separating clusters we have $g(\ell_{i+c})\subset \ell_i'' \subset g(\ell_i)$.
The mapping $g$ is then extended to the whole $D\setminus (P \cup Q)$ such that (i) $g$ is a homeomorphism when restricted to the interior
of the slab between $\ell_i$ and $\ell_{i+1}$, for all $0\le i<|\length(P)|$, where $\ell_{|\length(P)|}$ is the 
final piece of the boundary of $D''$, and (ii) no edge crossings are introduced, i.e., $g(D|_{G})$ is injective
and $g(D|_{G}) \cap G =\emptyset$. It is not hard to see that $g$ exists. Finally, we reconnect the severed end pieces of edges inside the cluster
containing $u$ and remove $Q''$.
The previous deformation is perhaps easier seen as follows. The clustered embedding of $G|_{D}$ can be made  arbitrarily skinny. Thus, for the purpose of  deformation, we can picture that $G|_{D}$ consists just of the part of $Q'$ and $Q''$ in $D$ and a strictly monotone 
path starting at  $u'$ as in Figure\ref{fig:deform6}. We map $G|_{D}$ as indicated in Figure~\ref{fig:deform7}.

Second, if both $\omega_u$ and $\omega_v$ are convex we can subdivide $f$ by $P'$ unless ${\bf p_{v}p_{u'}}$ is intersected by edge(s) of $G$ (we still assume that  ${\bf p_v'}\not\in{\bf p_{v}p_{u'}}$).
   However, if this is the case $D_0$ (defined above) contains a sink or source $s$ in its interior, see Figure~\ref{fig:deform4}. Note that we can assume that $s$ is also a tip of a shortest spur of $f$, and hence, the previous case applies.
\qed\end{proof}

\end{document}